\documentclass[lettersize,journal]{IEEEtran}
\usepackage{amsmath,amsfonts}
\usepackage{algorithmic}
\usepackage{algorithm}
\usepackage{array}
\usepackage[caption=false,font=normalsize,labelfont=sf,textfont=sf]{subfig}
\usepackage{textcomp}
\usepackage{stfloats}
\usepackage{url}
\usepackage{verbatim}
\usepackage{graphicx}
\usepackage{cite}
\usepackage{amssymb}
\usepackage{bm} 
\usepackage{xcolor}
\usepackage{amsthm}
\def\linspread{1}
\newtheorem{proposition}{Proposition}
\begin{document}

\title{Exploiting Movable Antennas for Multi-Target Wireless  Sensing System in Multipath Environments}

\author{Xiang Chen,~\IEEEmembership{Student Member,~IEEE,}~Ming-Min Zhao,~\IEEEmembership{Senior Member,~IEEE,}~Lebin Chen,~\IEEEmembership{Student Member,~IEEE,}~Jie Xu,~\IEEEmembership{Fellow,~IEEE,} and Min-Jian Zhao,~\IEEEmembership{Member,~IEEE}
\thanks{Xiang Chen, Ming-Min Zhao, Lebin Chen and Min-Jian Zhao are with the College of Information Science and Electronic Engineering, Zhejiang University, Hangzhou 310027, China (e-mail: 12231089@zju.edu.cn; zmmblack@zju.edu.cn; 12431101@zju.edu.cn; mjzhao@zju.edu.cn).}
\thanks{Jie Xu is with the School of Science and Engineering, the Shenzhen
	Future Network of Intelligence Institute (FNii-Shenzhen), and the Guangdong Provincial Key Laboratory of Future Networks of Intelligence, The Chinese University of Hong Kong (Shenzhen), Guandong 518172, China (email: xujie@cuhk.edu.cn).}
}


\maketitle

\begin{abstract}
In this paper, we study the multi-target detection problem in a movable-antenna (MA)-enabled wireless sensing system with linear arrays, in which both the direct  line-of-sight (LoS) paths and the first-order non-LoS (NLoS) paths are explicitly considered. Unlike conventional fixed-position antenna arrays, MAs provide additional design degrees of freedom by enabling adaptive antenna positioning  to reconfigure the propagation geometry, offering great potential to enhance the sensing performance in  complex multi-target multipath scenarios. Under this setup, we first develop a cross sparsity Markov mixture prior to  derive the posterior probabilities of target locations, in which the structural correlation between the LoS and NLoS paths is effectively exploited to enhance the location estimation accuracy. Based on the derived posterior probabilities, we further analyze the angular-domain sensing performance for multi-target detection by proposing a new two-dimensional (2D)  ambiguity function as the performance metric. Next, we optimize the MA positions to suppress the sidelobe levels and narrow the mainlobe width of the proposed ambiguity function. Although the resulting problem is highly non-convex, we develop a low-complexity Dykstra-based projected gradient descent  algorithm to solve it efficiently. Finally, simulation results verify the accuracy of the proposed ambiguity function analysis, demonstrate the substantial performance gains enabled by the proposed prior model and MAs, and show that the proposed algorithm achieves performance comparable to  existing  methods with significantly lower computational complexity.
\end{abstract}

\begin{IEEEkeywords}
Wireless sensing, movable antenna (MA),
antenna position optimization, multipath, multi-target, angle estimation, cross sparsity.
\end{IEEEkeywords}

\section{Introduction}
Multiple-input multiple-output (MIMO)  wireless sensing and integrated sensing and communication (ISAC)  have attracted considerable attention to enable  emerging applications such as 6G and vehicular networks \cite{yuan2020robust,pang2024dynamic,9737357}. Meanwhile, sensing environments are becoming increasingly complex, with a growing number of targets to be detected, which poses significant challenges to accurate parameter estimation \cite{9755276}. As the number of targets increases, the transmitted sensing waveform may undergo inter-target reflections, giving rise to multipath propagation that distorts the received observations and degrades estimation performance \cite{leigsnering2014multipath}. Furthermore, an increasing number of targets inherently reduces estimation accuracy, because the increased ambiguity and inter-target interference among target returns make the estimation problem progressively more ill-conditioned \cite{5744132}. Various designs have been proposed in the literature for multi-target estimation in multipath environments, which can be broadly categorized into transmit-side designs, such as waveform, beamforming, and resource allocation, and receive-side designs, such as structural prior modeling and parameter estimation algorithm design \cite{sabet2020hybrid,visentin2018analysis,manzoni2023multipath,zheng2024detection,10777052,rong2020diffuse,xu2021mimo,chen2025cross,9676556}.

Recently, reconfigurable arrays (e.g., movable antennas (MAs) and fluid antennas (FAs)) have attracted increasing attention in wireless sensing and ISAC systems, since their capability of reconfiguring antenna positions or spatial distributions provides additional spatial degrees of freedom for sensing-performance enhancement \cite{10643473,11086422,10707252,10477314}. In \cite{10643473}, a Cram{\'e}r-Rao lower bound (CRLB)-minimizing MA placement method was developed for single-target sensing. In \cite{11086422}, MAs were introduced into ISAC systems to further improve the sensing and communication tradeoff, and an optimization framework was developed to minimize the CRLB for sensing while guaranteeing a target communication rate, thereby enhancing the system performance under explicit communication quality constraints. Meanwhile, FAs were investigated as an alternative reconfigurable-antenna architecture for ISAC in \cite{10707252,10477314}. In \cite{10707252}, the FA spatial distribution was optimized to maximize the communication SNR subject to a sensing beampattern-gain constraint, thus enabling a controlled sensing-communication tradeoff. However, the resulting optimization problem was NP-hard in general. To  better address this issue, a deep reinforcement learning (DRL)-based approach was proposed in \cite{10477314} to optimize the FA positions. Nevertheless, the above studies mainly focused on improving the sensing performance of reconfigurable-antenna-enabled MIMO systems in a single-target setting. 
	
{Compared with single-target sensing, multi-target sensing usually suffers from performance degradation due to inter-target interference and stronger parameter coupling caused by the superposition of multiple target returns \cite{mao2025movable,zhang2025direct,san2007mimo,huan2023sasa,8094946}. This issue is further compounded by the substantial increase in the dimension of the unknown parameter set with the number of targets and propagation paths, which makes CRLB or beampattern-gain-based sensing metrics much more difficult to evaluate and optimize \cite{mao2025movable}. To address this, \cite{mao2025movable} proposed a Monte Carlo simulation-based method to approximate the multi-target CRLB. However, this approach generally incurs high computational complexity and long simulation time, while the resulting approximation accuracy may still be limited.}
 Consequently, directly minimizing the CRLB for multi-target sensing is challenging. As an alternative, the ambiguity function introduced in \cite{san2007mimo} could provide a more tractable surrogate, where suppressing sidelobes  is able to mitigate inter-target leakage and mutual interference, and narrowing mainlobe  improves parameter resolution and estimation accuracy. In \cite{huan2023sasa}, a multi-objective optimization problem was formulated to simultaneously narrow the mainlobe width and suppress the sidelobe level of the ambiguity function by optimizing fixed antenna positions in a sparse array, with the aim of improving the  performance of MIMO radar systems.
 However, the  optimization method in \cite{huan2023sasa}  mainly aims to achieve a balanced reduction of the sidelobe levels over the entire angular domain, rather than targeting the specific sidelobe regions that give rise to severe inter-target interference in multi-target scenarios. Moreover, since it does not exploit the separation between the AoD and AoA, it is not applicable to multipath scenarios where different propagation paths may have mismatched AoD-AoA pairs. To better characterize the mutual interference among multiple targets, it is important to obtain coarse estimates of their locations. In multipath environments, the work \cite{chen2025cross} proposed a cross sparsity prior structure, which exploits the fact that the AoD and AoA of a first-order  non-line-of-sight (NLoS) path may coincide with those of the direct LoS paths of two different targets, thereby improving the target estimation accuracy and detection probability. However, when false alarms occur over first-order NLoS paths, the cross sparsity structure may introduce erroneous coupling to the direct LoS paths, thereby leading to inaccurate estimation of the direct LoS paths. Consequently, the existing cross-sparsity structure remains vulnerable to such erroneous coupling, which compromises its reliability for high-precision multi-target estimation in complex NLoS environments. This motivates us to incorporate a Markov prior into the cross-sparsity model to suppress erroneous path coupling and improve multi-target sensing performance.

\begin{figure}[t]
	\centering
	\includegraphics[width=2.7in]{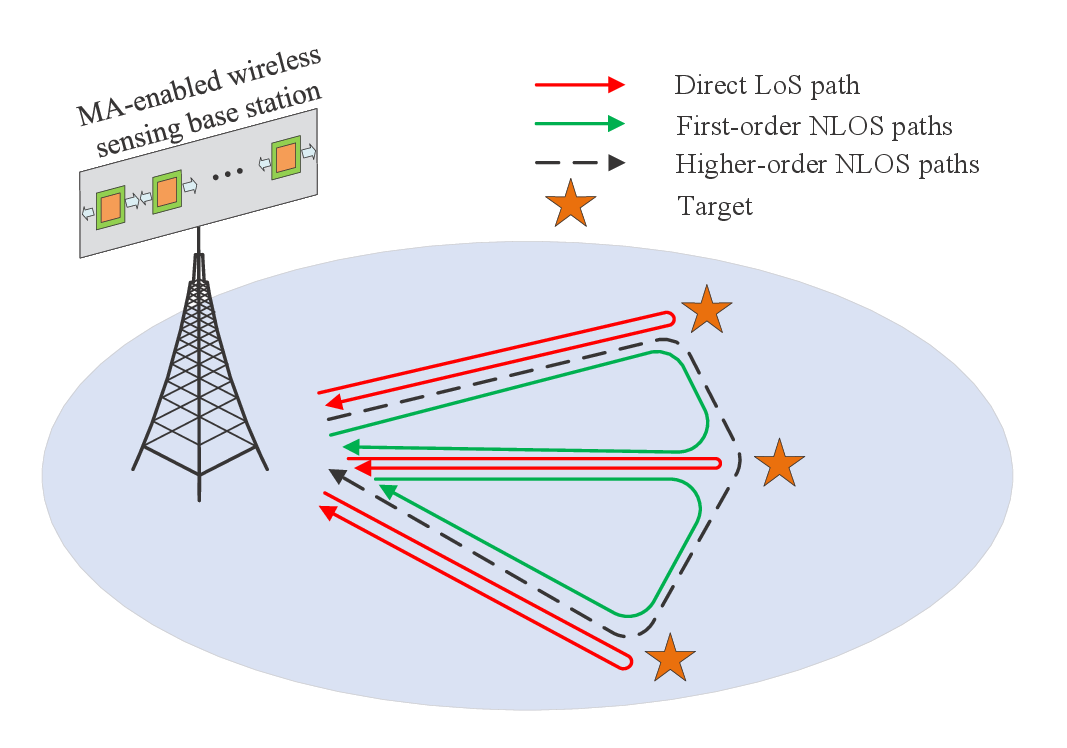}
	\caption{MA-enabled multi-target wireless sensing system.}
	\vspace{-2mm}
	\label{fig1}
	\vspace{-3mm}
\end{figure}

Motivated by the above considerations, in this paper, we investigate an MA-enabled  wireless sensing system for multi-target sensing in multipath environments, where the first-order NLoS paths are characterized by the fact that their AoAs and AoDs match the direct-path angles of different targets, as illustrated in Fig. \ref{fig1}. Moreover, we investigate how this structural property can be exploited as a powerful prior and how optimizing the MA positions can further enhance the sensing performance. In contrast to fixed arrays, MAs can adapt their antenna positions to different target geometries, which is particularly attractive for multi-target sensing because the relative separations among targets determine how they interact with the sidelobe structure of the ambiguity function, and thus directly affect the sensing performance. This adaptability provides additional spatial degrees of freedom to suppress inter-target ambiguity and improve estimation accuracy. The main contributions of this work are summarized as follows. 
 \begin{itemize}
 \item 
First, we construct a sparse representation of the multi-target sensing signal in multipath scenarios over two-dimensional (2D) angular grids by decoupling the transmit and receive angles, which enables explicit modeling of the first-order multipath components. {Building on the resulting path structure, we propose a cross sparsity Markov mixture prior. It can effectively suppress false detection of first-order NLoS paths caused by the cross sparsity assumption, and mitigate the resulting error propagation through the cross sparsity coupling, thereby improving the overall estimation accuracy.}

\item 
Second, we construct a 2D posterior-probability-based ambiguity function to characterize the multi-target sensing performance in  the considered system. Specifically, we transform the posterior probabilities of the targets' rough locations into a probability distribution over their relative position offsets, which is then used as a weight to characterize the ambiguity function sidelobe leakage across targets, thereby quantifying the resulting inter-target interference and estimation loss. Besides, we derive an explicit relationship between the mainlobe width of the proposed ambiguity function and the antenna spatial distribution. 
 
 \item 
Third, building upon the proposed ambiguity function, we formulate an optimization problem to optimize the MA positions for ambiguity function shaping, where  sidelobe levels are penalized in a posterior-weighted manner so as to more strongly suppress those regions that are most detrimental to multi-target sensing, while simultaneously narrowing the mainlobe. The resulting problem is highly nonconvex and challenging to solve efficiently. To address this issue, we develop a Dykstra-based projected gradient descent (DPGD) algorithm by exploiting the fact that the feasible set is the intersection of multiple convex sets. The proposed algorithm achieves performance comparable to the baselines while requiring substantially lower computational complexity. Simulation results further verify that the proposed cross sparsity prior and MA design can significantly improve the multi-target sensing accuracy in terms of root mean square error (RMSE).
\end{itemize}
 
 The remainder of the paper is outlined as follows. Section~II describes the system model  and formulates a sparse representation of the multi-target sensing channel. Section~III   presents the  proposed cross sparsity Markov mixture prior model. Section IV introduces the posterior-probability-based ambiguity function, and formulates the resulting optimization problem. Section~V develops the DPGD algorithm. Section~VI presents numerical results, and conclusions are drawn in Section VII. 
 
 \textit{Notations:} Scalars, vectors and matrices are respectively denoted by lower/upper case, boldface lower case and boldface upper case letters. For an arbitrary matrix $\mathbf{A}$, $\mathbf{A}^T$, $\mathbf{A}^*$ and $\mathbf{A}^H$ denote its transpose, conjugate and conjugate transpose respectively. $\|\cdot\|$ denotes the Euclidean norm of a complex vector, and $\lvert\cdot\lvert$ denotes the absolute value of a complex scalar. $\lceil \cdot \rceil$ represents the ceiling function. $\otimes$ and $\circ$  represent the Kronecker product  and Khatri-Rao (KR)
 product, respectively.    $\text{vec}(\cdot)$ represents vectorization. $\mathbf{I}$ and $\mathbf{0}$ denote an identity matrix and an all-zero vector with appropriate dimensions, respectively. $\mathbb{C}^{n\times m}$ denotes the space of $n\times m$ complex matrices. $\mathbf 1_m \in \mathbb R^{m\times1}$ denotes the all-one column vector of length $m$. $\Gamma(o; a, b)$ denotes a Gamma distribution at the variable $o$ with shape parameter $a$ and rate parameter $b$. $\mathbb{E}[\cdot]$ denotes the expectation with respect to the underlying random variables.  $\mathbb{Z}$ denotes the set of  integers. $\mathcal{CN}(c,d)$ denotes the circularly symmetric complex Gaussian distribution with mean $c$ and variance $d$. The corresponding probability density function evaluated at variable $k$ is denoted by $\Phi_{\mathcal{CN}}(k;c,d)$ $\mathcal{O}(\cdot)$ denotes the asymptotic computational complexity order.

\section{System Model}
\subsection{MA-enabled Multi-Target Sensing System}
 As shown in Fig. \ref{fig1},  an MA-enabled  wireless sensing system equipped with a colocated transmitter
and receiver is considered, which are comprised of linear arrays with $M_t$ and $M_r$ antennas, respectively. Each MA is attached to an electrical machinery, such that the interval between two
adjacent antennas can be dynamically adjusted \cite{ma2023capacity}. Assume that there are $K$ targets within the  far-field sensing region, whose azimuth angles are denoted by
$\boldsymbol{\theta}_T=[\theta_1,\theta_2,\cdots,\theta_K]^T$.
In such a multi-target scenario, the backscattered echoes may experience inter-target reflections before returning to the receive array, thereby giving rise to a multipath propagation environment. Consequently, the received echo signal comprises  both LoS and  NLoS components.
The LoS components correspond to the direct propagation path between the base station and each target, for which the AoD coincides with the AoA.
 Besides the direct LoS paths, we also consider the first-order NLoS propagation  arising from inter-target reflections, where the transmitted signal is reflected by one target and subsequently impinges on another target, so that the associated AoD and AoA are determined by different targets \cite{chen2025cross}.
 Note that higher-order NLoS paths are ignored, as they typically involve two or more reflections and thus suffer from severe attenuation.

Let $\mathbf{U}=[\mathbf{u}(1),\mathbf{u}(2),\cdots,\mathbf{u}(L)]\in\mathbb{C}^{M_t\times L}$ denote the transmitted sensing signal matrix, where $L$ is the number of snapshots and $\mathbf{u}(l)=[u_1(l),u_2(l),\cdots,u_{M_t}(l)]^{T}$ denotes the transmitted signal vector at the $l$-th snapshot \cite{zheng2024detection}. Then, the corresponding received signal matrix is expressed as $\mathbf{R}=[\mathbf{r}(1),\mathbf{r}(2),\cdots,\mathbf{r}(L)]\in\mathbb{C}^{M_r\times L}$, where
\begin{equation}\label{eq:signal_model}
	\begin{aligned}
		\mathbf{r}(l)=&\underbrace{\sum_{k=1}^{K} \alpha_k \mathbf{a}_r(\theta_k)\mathbf{a}_t^{T}(\theta_k) \mathbf{u}(l)}_{\text{Direct LoS path}}+\!\!\!\!\underbrace{\sum_{\substack{i=1,\, j=1 \\ i \neq j}}^{K} 
			\!\!\!\beta_{i,j}\mathbf{a}_r(\theta_i)\mathbf{a}_t^{T}(\theta_j) \mathbf{u}(l)}_{\text{First-order NLoS paths}}\\&+\mathbf{n}_r(l).
	\end{aligned}
	\vspace{-1mm}
\end{equation}
 In \eqref{eq:signal_model}, $\alpha_k = \zeta_k g_k$ denotes the complex gain of the direct LoS path  with target $k$, with $\zeta_k \sim \mathcal{CN}(0,1)$   modeling  the normalized random fluctuation of the target scattering coefficient and $g_k$ representing large-scale attenuation, respectively. The coefficient $\beta_{i,j} = \epsilon_{i,j} h_{i,j}$ represents the complex gain of the first-order NLoS path associated with the ordered  target pair $(i,j)$, where $i$ denotes the index of the AoD corresponding to the $i$-th target and $j$ denotes the index of the AoA corresponding to the $j$-th target. Here, $\epsilon_{i,j} \sim \mathcal{CN}(0,1)$ accounts for the normalized random fluctuation of the corresponding scattering coefficient, and $h_{i,j}$ represents the corresponding large-scale attenuation.
The  term $\mathbf{n}_r(l)$ denotes the additive white Gaussian noise (AWGN). 
 Furthermore, $\mathbf{a}_t(\theta)$ and $\mathbf{a}_r(\phi)$ denote the  transmit and receive steering vectors with AoD $\theta$ and AoA $\phi$, respectively, given by 
$
			\mathbf{a}_t(\theta)
		\triangleq
		\big[
		e^{j \frac{2\pi}{\lambda}x_{t,1} \sin(\theta)},
		\cdots,
		e^{j \frac{2\pi}{\lambda}x_{t,M_t} \sin(\theta)}
		\big]^{T}\in\mathbb{C}^{M_t}$
and
$
		\mathbf{a}_r(\phi)
	\triangleq
		\big[
		e^{j \frac{2\pi}{\lambda}x_{r,1} \sin(\phi)},
		\cdots,
		e^{j \frac{2\pi}{\lambda}x_{r,M_r} \sin(\phi)}
		\big]^{T}	\in\mathbb{C}^{M_r}$,
where $\mathbf{x}_t\triangleq[x_{t,1},x_{t,2},\ldots,x_{t,M_t}]^{T}$ and
$\mathbf{x}_r\triangleq[x_{r,1},x_{r,2},\ldots,x_{r,M_r}]^{T}$ collect the transmit- and receive-antenna positions (along the array axis), respectively. Our objective is to estimate $\theta_k, k\in\mathcal{K}\triangleq\{1,2,\cdots,K\}$ from $\mathbf{r}(l)$ in \eqref{eq:signal_model}.
\vspace{-3mm}

\subsection{Sparse Representation of the Multi-Target Sensing Channel}
As shown in  \eqref{eq:signal_model},  the received signal is the superposition of the direct LoS and the first-order NLoS components. Thus, we propose to  discretize the angular domain and cast the multi-target sensing channel into a sparse, dictionary-based form for tractable estimation.
Specifically, the  transmit and receive angular domains are uniformly discretized over
$\big[-\frac{\pi}{2},\,\frac{\pi}{2}\big]$ into $Q$ grid points, denoted by
$\bm{\bar{\theta}}_t=[\bar{\theta}_{t,1},\ldots,\bar{\theta}_{t,Q}]^{T}$ and
$\bm{\bar{\theta}}_r=[\bar{\theta}_{r,1},\ldots,\bar{\theta}_{r,Q}]^{T}$, respectively.
Moreover, to improve the estimation accuracy, we update the parameters of the direct LoS paths and the first-order NLoS paths separately.
Accordingly, we expand the angular grids to $Q_1\triangleq Q^2$ points so that the two types of paths can be refined independently in the 2D angular domain, i.e.,
$\bm{\theta}_t=\mathbf{1}_Q\otimes\bm{\bar{\theta}}_t\in\mathbb{R}^{Q_1\times1}$ and
$\bm{\theta}_r=\bm{\bar{\theta}}_r\otimes\mathbf{1}_Q\in\mathbb{R}^{Q_1\times1}$.

Let $q_{c,(i-1)K+j}$ denote the grid index that best matches the transmit-receive angle pair $(\theta_i,\theta_j)$, i.e.,
\begin{equation}
	q_{c,(i-1)K+j}
	\triangleq
	\arg\min_{q\in\mathcal{Q}^2}
	\Big(
	\big|\theta_i-\theta_{t,q}\big|
	+
	\big|\theta_j-\theta_{r,q}\big|
	\Big),
\end{equation}
where $\mathcal{Q}^2 \triangleq \{1,2,\cdots,Q_1\}$ denote the index set of all predefined transmit–receive grid pairs $(\theta_{t,q},\theta_{r,q})$, and $i,j\in\mathcal{K}$.
In particular, the case $i=j$ corresponds to the direct LoS path, while $i\neq j$ represents the first-order NLoS paths. 
We assume that $Q_1$ is sufficiently large such that different paths are associated with distinct nearest grid points. Then, we collect the indices of the nearest grid pairs into
$
\mathcal Q_c \triangleq \big\{ q_{c,(i-1)K+j} \,\big|\, i,j\in\mathcal K \big\} \subseteq \mathcal Q^2 $.
For each $q\in\mathcal{Q}_c$, there exists a unique pair of target indices $(i_q,j_q)\in\mathcal{K}\times\mathcal{K}$ such that
$
	q = q_{c,(i_q-1)K + j_q}.
$
Accordingly, for each angular-grid index $q\in\mathcal{Q}^2$ with transmit and receive grid points $(\theta_{t,q},\theta_{r,q})$, we define the corresponding angular offsets as the mismatch between the true angles $(\theta_{i_q},\theta_{j_q})$ and their nearest grid points, i.e.,
\begin{equation}
	\Delta\theta_{t,q} \triangleq
	\begin{cases}
		\theta_{i_q} - \theta_{t,q}, & \text{if } q \in \mathcal Q_c,\\[4pt]
		0,                            & \text{if } q \notin \mathcal Q_c,
	\end{cases}
\end{equation}
\begin{equation}
	\Delta\theta_{r,q} \triangleq
	\begin{cases}
		\theta_{j_q} - \theta_{r,q}, & \text{if } q \in \mathcal Q_c,\\[4pt]
		0,                            & \text{if } q \notin \mathcal Q_c.
	\end{cases}
\end{equation}
We further define
$\bm{\Delta\theta}_t \triangleq [\Delta\theta_{t,1},\ldots,\Delta\theta_{t,Q_1}]^{T}$ and
$\bm{\Delta\theta}_r \triangleq [\Delta\theta_{r,1},\ldots,\Delta\theta_{r,Q_1}]^{T}$
as the transmit- and receive-side angular offset vectors, respectively.
These offsets yield a refined angle representation by expressing each true AoD/AoA as the sum of its nominal grid angle and the corresponding offset, and they  play a pivotal role in enabling super-resolution sensing in the proposed framework.

Based on the above definitions, we construct the angle-dependent transmit and receive dictionaries associated with the expanded grids $\bm{\theta}_t$ and $\bm{\theta}_r$ as
$
	\mathbf{A}_t(\bm{\Delta\theta}_t)
	\triangleq
	\big[\mathbf{a}_t(\theta_{t,1}+\Delta\theta_{t,1}),\, \ldots,\, \mathbf{a}_t(\theta_{t,Q_1}+\Delta\theta_{t,Q_1})\big]
	\in \mathbb{C}^{M_t\times Q_1},
$
and
$
	\mathbf{A}_r(\bm{\Delta\theta}_r)
	\triangleq
	\big[\mathbf{a}_r(\theta_{r,1}+\Delta\theta_{r,1}),\, \ldots,\, \mathbf{a}_r(\theta_{r,Q_1}+\Delta\theta_{r,Q_1})\big]
	\in \mathbb{C}^{M_r\times Q_1}.
$
Then, the received signal model in \eqref{eq:signal_model} can be rewritten in a sparse form as
\begin{equation}
	\mathbf{R} = \mathbf{H}\mathbf{U} + \mathbf{N}_r,
\end{equation}
where
\begin{equation}\label{eq:0419_1}
	\mathbf{H}
	\triangleq
	\mathbf{A}_r(\bm{\Delta\theta}_r)\,\mathrm{diag}(\mathbf{x})\,\mathbf{A}_t^{T}(\bm{\Delta\theta}_t)
	\in \mathbb{C}^{M_r\times M_t},.
\end{equation}
In \eqref{eq:0419_1},  $\mathbf{x}\in\mathbb{C}^{Q_1\times 1}$ is the angle-domain sparse sensing channel vector, whose $q$-th entry $x_q$ denotes the complex coefficient of the sensing path associated with the $q$-th AoA-AoD grid pair.
$\mathbf{N}_r=[\mathbf{n}_r(1),\mathbf{n}_r(2),\ldots,\mathbf{n}_r(L)]\in\mathbb{C}^{M_r\times L}$ denotes the received noise matrix.
At the receiver of the wireless sensing system, a matched filter is applied to the transmitted waveform, which yields $
	\mathbf{Y} = \mathbf{R}\mathbf{U}^H.$
For subsequent processing, we vectorize $\mathbf{Y}$ as
\begin{equation}\label{eq:1014_1}
	\mathbf{y}
	\triangleq
	\mathrm{vec}(\mathbf{Y})
	=
	\mathbf{F}\!\left(\bm{\Delta\theta}_t,\bm{\Delta\theta}_r\right)\mathbf{x}
	+
	\mathbf{n}
	\in\mathbb{C}^{M_tM_r\times 1},
\end{equation}
where $\mathbf{n}\triangleq \mathrm{vec}(\mathbf{N}_r\mathbf{U}^H)$ and the sensing measurement matrix is given by
\begin{equation}\label{eq:1014_2}
	\mathbf{F}\!\left(\bm{\Delta\theta}_t,\bm{\Delta\theta}_r\right)
	=
	\big((\mathbf{U}\mathbf{U}^H)^T\otimes \mathbf{I}_{M_r}\big)
	\big(\mathbf{A}_t(\bm{\Delta\theta}_t)\circ \mathbf{A}_r(\bm{\Delta\theta}_r)\big).
\end{equation}

 Our objective is to improve the multi-target estimation accuracy by optimizing the antenna positions. This is challenging because, in multi-target multipath scenarios, the increased number of unknown parameters $\{\bm{\Delta\theta}_t,\bm{\Delta\theta}_r,\mathbf{x}\}$ generally makes a closed-form CRLB unavailable, whereas the conventional 2D ambiguity function cannot accurately capture the effect of inter-target sidelobe interference on estimation performance. To address this issue, we first propose a cross-sparsity Markov mixture prior model and incorporate it into the fast turbo variational Bayesian inference (SF-TVBI) framework proposed in \cite{chen2025cross} to obtain coarse estimates of multiple target locations under a uniform linear array (ULA). Based on the posterior information obtained from the coarse estimates, we develop a new 2D posterior-probability-based ambiguity function to characterize the mutual sidelobe interference among targets, and further formulate an MA position optimization problem to determine favorable antenna locations. Finally, the target parameters are refined at the optimized antenna positions using the SF-TVBI algorithm under the proposed prior model. Fig.~\ref{figx1} summarizes the multi-target estimation procedure for the considered MA-enabled wireless sensing system.

\begin{figure}[t]
	\centering
	\includegraphics[width=3.5in]{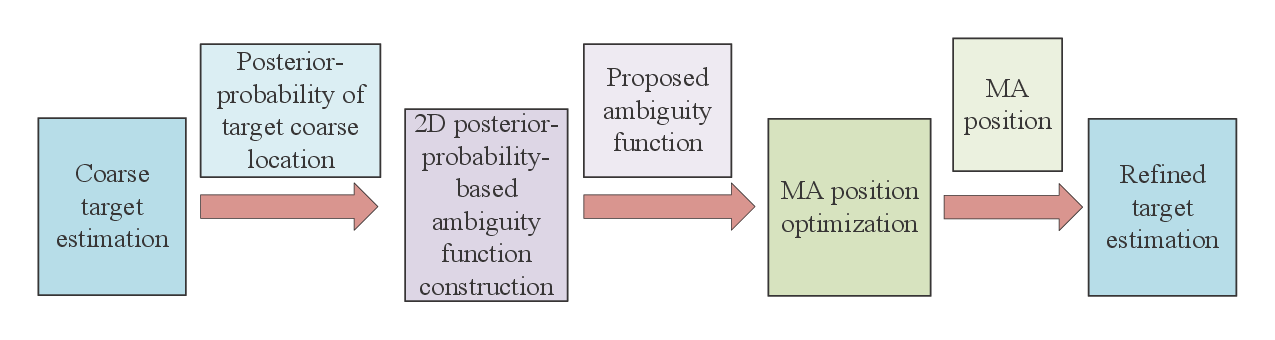}
	\vspace{-3mm}
	\caption{Illustration of the multi-target estimation procedure for the MA-enabled  wireless sensing system.}
	\vspace{-3mm}
	\label{figx1}
\end{figure}

\section{Proposed Cross-Sparsity Markov Mixture Prior Model}

For multi-target sensing in multipath environments, the work \cite{chen2025cross} proposed to exploit   the structural correlation between the first-order NLoS paths and the direct LoS paths under the sparse channel model in \eqref{eq:1014_1}. Specifically, since the AoD and AoA of a first-order NLoS path coincide with the direct-path angles of two different targets, the presence of such an NLoS path can provide additional evidence for the existence of the associated direct LoS paths. To exploit this structural property for improving multi-target sensing performance, \cite{chen2025cross} introduced  a three-layer hierarchical shrinkage (3LHS) prior to modeling the structured sparsity of the sparse channel vector $\mathbf{x}$.\footnote{Under this 3LHS prior, \cite{chen2025cross} estimates the target parameters via the SF-TVBI algorithm, which alternately updates the approximate posteriors of the latent variables in the 3LHS model and the angular offset parameters $\{\bm{\Delta\theta}_t,\bm{\Delta\theta}_r\}$.} Specifically, a precision vector $\boldsymbol{\rho}\triangleq[\rho_{1},\rho_{2},\ldots,\rho_{Q_1}]^{T}$ is introduced to characterize the scale variation of the entries in the sparse channel vector $\mathbf{x}$, where $1/\rho_q$ corresponds to the variance of $x_q$. In addition, a binary support vector $\mathbf{s}\triangleq[s_1,s_2,\ldots,s_{Q_1}]^{T}$ is used to indicate whether each entry of $\mathbf{x}$ is active, where $s_q=1$ means that $x_q$ is active and $s_q=0$ otherwise. With these latent variables, the joint distribution is written as
\begin{equation}\label{eq:0112_1}
	p(\mathbf{x},\boldsymbol{\rho},\mathbf{s})
	= p(\mathbf{x}|\boldsymbol{\rho})\,p(\boldsymbol{\rho}|\mathbf{s})\,p(\mathbf{s}).
\end{equation}
The cross sparsity is embedded in $p(\mathbf{s})$, so that the coupling between the first-order NLoS paths and the direct LoS paths can be incorporated into the prior modeling of the sparse channel. However, since this mechanism enhances the prior existence probability of the direct LoS paths according to the detected first-order NLoS paths, false alarms associated with the latter may also be propagated to the former, thereby degrading the coarse estimation accuracy. To address this issue, we further propose a cross-sparsity Markov mixture prior model, which aims to preserve the beneficial cross-sparsity correlation while improving robustness against false alarms.

Before introducing the proposed prior structure, we first complete the 3LHS model. Specifically, the sparse vector $\mathbf{x}$ is modeled as a zero-mean circularly symmetric complex Gaussian random vector with covariance $\mathrm{diag}(\boldsymbol{\rho}^{-1})$, i.e.,
\begin{equation}
	p(\mathbf{x}|\boldsymbol{\rho})
	= \prod_{q=1}^{Q_1} p(x_q|\rho_q)
	= \prod_{q=1}^{Q_1} \Phi_{\mathcal{CN}}\!\left(x_q;0,1/\rho_q\right).
\end{equation}
Such a conditional Gaussian modeling of $\mathbf{x}$, together with the hyperprior imposed on the precision variables, leads to a heavy-tailed marginal distribution after marginalizing over $\boldsymbol{\rho}$, thereby making it suitable for sparse signal modeling \cite{liu2020robust}. Since $p(\mathbf{x}|\boldsymbol{\rho})$ is conditionally complex Gaussian, a conjugate Gamma prior is assigned to each precision variable so as to enable closed-form updates and promote sparsity through shrinkage. Accordingly, the conditional prior of $\boldsymbol{\rho}$ is defined as a support-dependent Gamma mixture
\begin{equation}
	p(\boldsymbol{\rho}|\mathbf{s})=
	\prod_{q=1}^{Q_1}
	\Big(\Gamma(\rho_q;a_q,b_q)\Big)^{s_q}
	\Big(\Gamma(\rho_q;\bar a_q,\bar b_q)\Big)^{1-s_q}.
\end{equation}
Here, the parameters $(a_q,b_q)$ are chosen such that $
	\frac{a_q}{b_q}=\mathcal{O}(1)$,
which assigns a moderate prior precision to active coefficients and thus preserves their flexibility. 
In contrast, $(\bar a_q,\bar b_q)$ are selected such that 
$\frac{\bar a_q}{\bar b_q}\gg 1$.
The resulting large prior precision drives the variance $1/\rho_q$ toward zero, thereby strongly suppressing inactive components. The noise precision $\gamma$ is also modeled as a Gamma random variable, i.e.,
$p(\gamma)=\Gamma(\gamma;c,d)$, where $c$ and $d$ are hyperparameters.

\begin{figure}[t]
	\centering
	\includegraphics[width=3.0in]{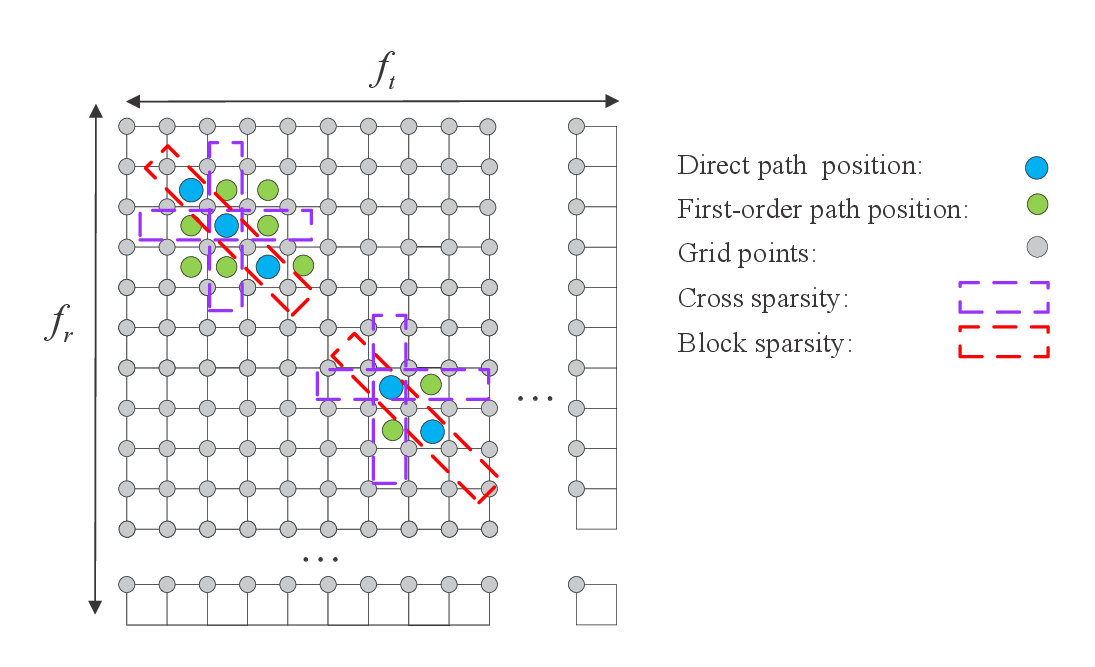}
	\caption{Illustration of proposed block and cross mixture sparsity.}
	\label{fig2}
	\vspace{-3mm}
\end{figure}
Among the three factors in \eqref{eq:0112_1}, the prior $p(\mathbf{s})$ is further  proposed to be modeled as  a proposed cross-sparsity Markov mixture prior, which plays the key role in capturing the structured sparsity pattern and the dependency among the support states in $\mathbf{s}$. As illustrated in Fig.~\ref{fig2}, in the AoD/AoA coefficient grid associated with the sensing dictionary, the direct LoS paths tend to form a block-sparse structure around the main diagonal due to spatial correlation across neighboring angular cells. In addition, inter-target first-order multipath produces off-diagonal components whose AoD or AoA coincide with those of different direct LoS paths, thereby inducing a cross-structured dependency between the off-diagonal first-order NLoS paths and the diagonal direct LoS paths. Therefore, $p(\mathbf{s})$ should jointly characterize the diagonal block sparsity of the direct LoS paths and the cross sparsity coupling introduced by the first-order NLoS paths. Accordingly, we define the diagonal indices as
	$q_i \triangleq (i-1)Q+i,\quad i\in\mathcal{Q}\triangleq\{1,\ldots,Q\}$,
such that $s_{q_i}$ denotes the support variable associated with the $i$-th direct LoS path grid on the main diagonal.
Let
$
	\mathbf{s}_{\mathrm d}\triangleq[s_{q_1},\ldots,s_{q_Q}]^{T}$
collect all diagonal support variables, and  $\mathbf{s}_{\mathrm f}$ collect the support variables on the off-diagonal grids that correspond to the first-order NLoS paths.
We model the overall support prior as
\begin{equation}
	p(\mathbf{s};\boldsymbol{\eta},\boldsymbol{\omega})
	=
	\frac{1}{Z(\boldsymbol{\eta},\boldsymbol{\omega})}\,
	\psi_{\mathrm{block}}(\mathbf{s}_{\mathrm d};\boldsymbol{\eta})\,
	\psi_{\mathrm{cross}}(\mathbf{s}_{\mathrm d},\mathbf{s}_{\mathrm f};\boldsymbol{\omega}),
\end{equation}
where $Z(\boldsymbol{\eta},\boldsymbol{\omega})$ is the partition function that ensures $\sum_{\mathbf{s}} p(\mathbf{s};\boldsymbol{\eta},\boldsymbol{\omega})=1$, and $\psi_{\mathrm{block}}(\cdot)$ and $\psi_{\mathrm{cross}}(\cdot)$ denote the potential functions associated with the block and cross sparsity, respectively. Note that $\psi_{\mathrm{block}}(\cdot)$ and $\psi_{\mathrm{cross}}(\cdot)$ are nonnegative potentials rather than normalized probability distributions, and the overall normalization of $p(\mathbf{s};\boldsymbol{\eta},\boldsymbol{\omega})$ is enforced by $Z(\boldsymbol{\eta},\boldsymbol{\omega})$. Their specific forms are provided below.

\textit{1) $\psi_{\mathrm{block}}(\cdot)$ (block sparsity):}
To capture the blockwise  occurrence of the direct LoS paths in the angular domain, we model
$\psi_{\mathrm{block}}(\mathbf{s}_{\mathrm{d}};\boldsymbol{\eta})$
as a first-order Markov chain over the direct LoS path support indices, i.e.,
\begin{equation}\label{eq:diag_mc_potential}
	\psi_{\mathrm{block}}(\mathbf{s}_{\mathrm{d}};\boldsymbol{\eta})
	= p(s_{q_1}) \prod_{i=2}^{Q} p(s_{q_i}| s_{q_{i-1}}),
\end{equation}
where $\boldsymbol{\eta}$ collects the parameters of the initial distribution and the transition probabilities of the Markov chain.
For the binary state $s_{q_i}\in\{0,1\}$, the Markov-chain potential admits the standard parameterization used in the literature.
Specifically, the initial distribution and the transition probabilities are respectively  given by
\begin{equation}
	p(s_{q_1})
	= \pi_0^{\,1-s_{q_1}}\;\pi_1^{\,s_{q_1}},
\end{equation}
\begin{equation}
	p(s_{q_i}| s_{q_{i-1}})
	=
	\begin{cases}
		(1-p_{01})^{1-s_{q_i}}\, p_{01}^{s_{q_i}}, & s_{q_{i-1}} = 0,\\[1mm]
		p_{10}^{1-s_{q_i}}\, (1-p_{10})^{s_{q_i}}, & s_{q_{i-1}} = 1,
	\end{cases}
\end{equation}
where $i=2,\ldots,Q$, $\pi_0 \triangleq \Pr\{s_{q_1}=0\}$ and $\pi_1 \triangleq \Pr\{s_{q_1}=1\}$ denote the initial-state probabilities with $\pi_0+\pi_1=1$.
Moreover, $p_{01}\triangleq\Pr\{s_{q_i}=1| s_{q_{i-1}}=0\}$ and $p_{10}\triangleq\Pr\{s_{q_i}=0| s_{q_{i-1}}=1\}$ are the transition probabilities.
Here, $p_{01}$ controls the  {activation} from an inactive state to an active state, whereas $p_{10}$ controls the  {deactivation} from an active state to an inactive state.
The parameters $\pi_1$, $\pi_0$, $p_{01}$, and $p_{10}$ control the activation level and the persistence of the Markov chain across adjacent indices. As a result, the active states along the direct LoS path indices tend to appear in contiguous runs, which is consistent with the block sparsity structure of the direct LoS paths in the angular domain. Compared with the conventional independent Bernoulli prior, the proposed Markov-chain potential explicitly introduces {local dependency} at the graphical-model level. 

\textit{2)  $\psi_{\mathrm{cross}}(\mathbf{s}_{\mathrm d},\mathbf{s}_{\mathrm f};\boldsymbol{\omega})$ (cross sparsity):}
For the first-order NLoS paths,   let
$
s_{(i-1)Q+j},\, i,j\in\mathcal{Q},\, i\neq j$,
denote the support variable on the grid corresponding to the $i$-th transmit angle and the $j$-th receive angle.
This variable exhibits a structured dependency with the direct LoS path support $s_{q_i}$ that shares the same transmit angle,
as well as the direct LoS path support $s_{q_j}$ that shares the same receive angle.
Accordingly, we define the cross potential as
\begin{equation}\label{eq:cross_potential}
	\begin{aligned}
		&	\psi_{\mathrm{cross}}(\mathbf{s}_{\mathrm{d}},\mathbf{s}_{\mathrm{f}};\boldsymbol{\omega})
		\\&= \prod_{i\neq j}
		\varphi^{(t)}\big(s_{(i-1)Q+j}, s_{q_i}\big)\,
		\varphi^{(r)}\big(s_{(i-1)Q+j}, s_{q_j}\big),
	\end{aligned}
\end{equation}
where
$
	\varphi^{(t)}\big(s_{(i-1)Q+j}, s_{q_i}\big)
	\triangleq e^{\omega^{(t)}_{i,j}\, s_{(i-1)Q+j}\, s_{q_i}}$
captures the interaction between the first-order NLoS path support $s_{(i-1)Q+j}$ and the direct LoS path support $s_{q_i}$ that share the same transmit-angle index, and
$
	\varphi^{(r)}\big(s_{(i-1)Q+j}, s_{q_j}\big)
	\triangleq e^{\omega^{(r)}_{i,j}\, s_{(i-1)Q+j}\, s_{q_j}}$
captures the interaction between the first-order NLoS path support $s_{(i-1)Q+j}$ and the direct LoS path support $s_{q_j}$ that share the same receive-angle index.
The parameters $\omega^{(t)}_{i,j}$ and $\omega^{(r)}_{i,j}$ quantify the corresponding coupling strengths along the transmit-angle and receive-angle dimensions, respectively.
 A positive interaction parameter (e.g., $\omega^{(t)}_{i,j}>0$ or $\omega^{(r)}_{i,j}>0$) encourages
the co-activation of a first-order NLoS path and its associated direct LoS path, i.e., when $s_{(i-1)Q+j}=1$, it increases the prior
tendency that the corresponding $s_{q_i}$ or $s_{q_j}$ is active. 
 
In summary, the prior $p(\mathbf{s};\boldsymbol{\eta},\boldsymbol{\omega})$ simultaneously encodes the block sparsity of the
direct LoS paths and the cross sparsity of the first-order NLoS paths relative to the direct LoS paths, via the main-diagonal Markov-chain
potential $\psi_{\mathrm{diag}}$ and the row/column interaction potential $\psi_{\mathrm{cross}}$, respectively.
Specifically, the Markov-chain structure helps suppress discrete fluctuations of the direct LoS path support under low SNR and enhances
the stability of coherent activations, while the cross-coupling structure characterizes the dependency between direct and first-order
paths along the shared transmit/receive-angle dimensions. The joint modeling of these two structures yields a physically interpretable
and structurally rich support prior for subsequent Bayesian inference.

Based on the above hierarchical prior model, the joint distribution of all associated random variables can be written as
\begin{equation}\label{eq:0822_4}
	\begin{aligned}
		&p(\mathbf{y},\mathbf{x},\boldsymbol{\rho},\mathbf{s},\gamma;\boldsymbol{\xi})
		\\&= p(\mathbf{y}|\mathbf{x},\gamma;\boldsymbol{\Delta{\theta}}_t,\boldsymbol{\Delta{\theta}}_r)\,
		p(\mathbf{x}|\boldsymbol{\rho})\,
		p(\boldsymbol{\rho}|\mathbf{s})\,
		p(\mathbf{s};\boldsymbol{\omega})\,
		p(\gamma),
	\end{aligned}
\end{equation}
where the parameter set is denoted by
$
\boldsymbol{\xi}\triangleq\{\boldsymbol{\Delta\theta}_t,\boldsymbol{\Delta\theta}_r, \boldsymbol{\omega}\}$,
and
$
p(\mathbf{y}|\mathbf{x},\gamma; \boldsymbol{\Delta{\theta}}_t,\boldsymbol{\Delta{\theta}}_r)
= \mathcal{CN}\!\left({\mathbf{F}}(\boldsymbol{\Delta{\theta}}_t,\boldsymbol{\Delta{\theta}}_r)\mathbf{x}, \gamma^{-1}\mathbf{I}\right)$.

Given the observation vector $\mathbf{y}$ and the parameter set $\boldsymbol{\xi}$, the following
marginal posteriors can be inferred via Bayesian inference, i.e,
$
p(\mathbf{x}|\mathbf{y}; \boldsymbol{\xi}), p(\boldsymbol{\rho}|\mathbf{y}; \boldsymbol{\xi}),
p(\mathbf{s}|\mathbf{y}; \boldsymbol{\xi}),
p(\gamma|\mathbf{y}; \boldsymbol{\xi})$, which can be derived by the E-step of the  SF-TVBI algorithm proposed in \cite{chen2025cross}.
Furthermore, the optimal parameter set $\boldsymbol{\xi}^*$ can be obtained by solving the following maximum-likelihood problem, i.e,
\begin{equation}\label{eq:0825_1}
	\boldsymbol{\xi}^*
	= \arg\max_{ \boldsymbol{\xi}}
	\ln p(\mathbf{y}| \boldsymbol{\xi}),
\end{equation}
which can be obtained by the M-step of SF-TVBI.\footnote{Since this work focuses on the gains in multi-target estimation accuracy enabled by the proposed hybrid prior structure and MA position optimization, rather than on the inference algorithm design itself, we simply  adopt the SF-TVBI algorithm in \cite{chen2025cross} for target sensing.} In particular, the proposed 2D posterior-probability-based ambiguity function is constructed based on the posterior $p(\mathbf{s}| \mathbf{y};\boldsymbol{\xi})$ obtained in the coarse estimation stage via the E-step of the SF-TVBI algorithm, where $p(\mathbf{s}| \mathbf{y};\boldsymbol{\xi})$ characterizes the probabilities of the targets being located on different grids over the $Q\times Q$ target plane. These posterior probabilities are then mapped to the offset grid to weight different sidelobe regions for subsequent MA position optimization.

\section{Proposed Ambiguity Function and Problem Formulation}
In this section, we propose a  2D posterior-probability-based ambiguity function to better characterize the sensing performance of the MA-enabled  wireless sensing system in multi-target multipath environments. Specifically, by incorporating the proposed prior model into the SF-TVBI framework \cite{chen2025cross}, we first perform coarse target estimation and infer the posterior probabilities of the target locations. Based on these posterior probabilities, we further evaluate the posterior probabilities associated with each target offset and use them to weight the sidelobes of the conventional 2D ambiguity function, yielding the final formulation. On this basis, we formulate an MA position optimization problem to adjust the antenna locations so as to narrow the mainlobe and suppress the sidelobe levels at competing offset locations, thus improving the multi-target sensing performance.

\subsection{2D Ambiguity Function and Mainlobe Width Analysis}
We first briefly discuss the conventional  one-dimensional (1D)  ambiguity function in the angular domain, defined as the cosine similarity between the steering vectors associated with two directions \cite{eric1998ambiguity}. Specifically, it  measures the array’s ability to discriminate between targets arriving from two directions, denoted by
\begin{equation}
	\begin{aligned}
		\mathrm{AAF}(\theta_i,\theta_j)
		&= \frac{\mathbf{a}(\theta_j)^{H}\mathbf{a}(\theta_i)}
		{\left\|\mathbf{a}(\theta_j)\right\|\left\|\mathbf{a}(\theta_i)\right\|} \\
		&= \frac{1}{M_tM_r}\sum_{m=1}^{M_t}\sum_{n=1}^{M_r}
		e^{j\frac{2\pi}{\lambda}\left(x_{t,m}+x_{r,n}\right)\left(\sin\theta_i-\sin\theta_j\right)},
	\end{aligned}
\end{equation}
where  $\mathbf{a}(\theta)\triangleq \mathbf{a}_t(\theta)\otimes\mathbf{a}_r(\theta)$ represents the  virtual array steering
vector, $\theta_i$ and $\theta_j$ represent  the reference angle and test angle, respectively. It is noteworthy that this 1D ambiguity function fails to capture the transmit and receive angle separation for multi-target sensing in multi-path scenarios, because  it only accounts for the case where the transmit and receive angles are identical. Therefore, following the basic definition of the ambiguity function \cite{san2007mimo}, we develop a 2D ambiguity function as
\begin{equation}\label{eq:1130}
	\begin{aligned}
		&\chi({\theta_t,\theta_r,\theta'_t,\theta'_r})\\&\triangleq\frac{(\mathbf{a}_t(\theta_t)\otimes\mathbf{a}_r(\theta_r))^H(\mathbf{a}_t(\theta'_t)\otimes\mathbf{a}_r(\theta'_r))}{\|\mathbf{a}_t(\theta_t)\otimes\mathbf{a}_r(\theta_r)\|\|\mathbf{a}_t(\theta'_t)\otimes\mathbf{a}_r(\theta'_r)\|}\\
		&=\frac{1}{M_tM_r}	\sum_{m=1}^{M_t}
		e^{j \alpha_{t,m}\big(\sin\theta'_t - \sin\theta_t\big)}
		\sum_{n=1}^{M_r}
		e^{j\alpha_{r,n}\big(\sin\theta'_r - \sin\theta_r\big)}	,\\
	\end{aligned}
\end{equation}
where $\alpha_{t,m}\triangleq\frac{2\pi x_{t,m}}{\lambda}$ and $\alpha_{r,n}\triangleq\frac{2\pi x_{r,n}}{\lambda}$. $\chi({\theta_t,\theta_r,\theta'_t,\theta'_r})$ quantifies the normalized correlation between the array responses associated with two transmit-receive angle pairs in the joint angular sensing space. A larger magnitude of $\chi({\theta_t,\theta_r,\theta'_t,\theta'_r})$ corresponds to higher ambiguity and thus lower angular resolvability, whereas a smaller magnitude indicates that the two angle pairs are more distinguishable. It is observed from \eqref{eq:1130} that the ambiguity function does not depend on the four angle variables $\{\theta_t,\theta_r,\theta_t',\theta_r'\}$ individually, but only on the two differences $\sin\theta_t'-\sin\theta_t$ and $\sin\theta_r'-\sin\theta_r$. Therefore, although $\chi(\theta_t,\theta_r,\theta_t',\theta_r')$ is originally defined over two transmit-receive angle pairs, it can be equivalently characterized in a two-dimensional offset domain. To facilitate the subsequent derivation, we introduce the spatial-frequency  variables $f_t\triangleq \sin\theta_t$, $f'_t\triangleq \sin\theta'_t$, $f_r\triangleq \sin\theta_r$, and $f'_r\triangleq \sin\theta'_r$, and define the corresponding spatial-frequency  offsets as $\Delta f_t\triangleq f'_t-f_t$ and $\Delta f_r\triangleq f'_r-f_r$. Then, \eqref{eq:1130} can be equivalently rewritten as
\begin{equation}\label{eq:1131}
	\begin{aligned}
		\chi({f_t,f_r,f'_t,f'_r})
		&=
		\!\frac{1}{M_tM_r}\sum_{m=1}^{M_t}
		\!\! e^{j \alpha_{t,m}\big(f'_t- f_t\big)}
		\sum_{n=1}^{M_r}
		e^{j \alpha_{r,n}\big(f'_r -f_r\big)}
		\\&	=
	\frac{1}{M_tM_r}	\sum_{m=1}^{M_t}
		e^{j \alpha_{t,m}\Delta f_t}
		\sum_{n=1}^{M_r}
		e^{j \alpha_{r,n}\Delta f_r}.
	\end{aligned}
\end{equation}

As indicated by \eqref{eq:1131}, $\chi({f_t,f_r,f'_t,f'_r})$ is given by the product of two sums of complex exponentials. Therefore, establishing explicit mathematical relationships between the antenna distribution and overall ambiguity function characteristics, particularly the sidelobe levels, is generally difficult. In contrast, the mainlobe width is more amenable to theoretical characterization, as the ambiguity function can be locally approximated in the vicinity of $\Delta f_t = 0$ and $\Delta f_r = 0$. This observation motivates us to first focus  on the relationship between the mainlobe width and the antenna distribution. The following proposition characterizes the geometry of the mainlobe region of the considered 2D ambiguity function.
\begin{proposition}\label{prop:mainlobe_ellipse}
Under the local mainlobe approximation, the mainlobe region of the 2D ambiguity function, defined by the $6\,\mathrm{dB}$ power contour of  $\lvert\chi(f_t,f_r,f'_t,f'_r)\rvert^2$ relative to its peak, can be  approximately characterized by the ellipse
	\begin{equation}\label{eq:prop_mainlobe_ellipse}
		\frac{\Delta f_t^2}{a_t^2}+\frac{\Delta f_r^2}{a_r^2}=1,
	\end{equation}
	where $\Delta f_t\triangleq f_t-f'_t$, $\Delta f_r\triangleq f_r-f'_r$, $a_t^2\triangleq \frac{1-\rho_6}{\sigma_t^2}$ and $a_r^2\triangleq \frac{1-\rho_6}{\sigma_r^2}$, with $\rho_6\triangleq 10^{-0.6}$, $\sigma_t^2\triangleq \mu_{2,t}-\mu_{1,t}^2$, and $\sigma_r^2\triangleq \mu_{2,r}-\mu_{1,r}^2$. Here, $\mu_{1,t}\triangleq \frac{1}{M_t}\sum_{m=1}^{M_t}\alpha_{t,m}$, $\mu_{2,t}\triangleq \frac{1}{M_t}\sum_{m=1}^{M_t}\alpha_{t,m}^2$, $\mu_{1,r}\triangleq \frac{1}{M_r}\sum_{n=1}^{M_r}\alpha_{r,n}$, and $\mu_{2,r}\triangleq \frac{1}{M_r}\sum_{n=1}^{M_r}\alpha_{r,n}^2$.
	Accordingly, the mainlobe width is given by the  minor-axis length of the ellipse, i.e.,
	\begin{equation}\label{eq:prop_mainlobe_width}
		\begin{aligned}
					\mathrm{W}_{\mathrm{main}}
			&=
			\min\{2a_t,\,2a_r\}
			=
			\min\left\{
			2\sqrt{\frac{1-\rho_6}{\sigma_t^2}},
			2\sqrt{\frac{1-\rho_6}{\sigma_r^2}}
			\right\}.
		\end{aligned}
	\end{equation}
\end{proposition}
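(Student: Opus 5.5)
The plan is to use the product structure in \eqref{eq:1131} to factor the squared magnitude, expand each factor to second order around zero offset, and then read off the ellipse from the $6\,\mathrm{dB}$ level set. Since $|\chi|$ is even in $(\Delta f_t,\Delta f_r)$, the sign convention for the offsets does not matter. Write
\begin{equation*}
|\chi|^2 = |G_t(\Delta f_t)|^2\,|G_r(\Delta f_r)|^2,
\end{equation*}
where
\begin{equation*}
G_t(\Delta f_t)\triangleq \frac{1}{M_t}\sum_{m=1}^{M_t} e^{j\alpha_{t,m}\Delta f_t}
\end{equation*}
and $G_r$ is defined analogously. The peak value is $|\chi|^2=1$ at $\Delta f_t=\Delta f_r=0$.

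The first step is a second-order Taylor expansion of $|G_t|^2$. Expanding $e^{j\alpha_{t,m}\Delta f_t}$ gives
\begin{equation*}
G_t(\Delta f_t)\approx 1+j\mu_{1,t}\Delta f_t-\tfrac{1}{2}\mu_{2,t}\Delta f_t^2 .
\end{equation*}
Taking the squared modulus and keeping terms up to order two yields
\begin{equation*}
|G_t|^2\approx 1+\mu_{1,t}^2\Delta f_t^2-\mu_{2,t}\Delta f_t^2=1-\sigma_t^2\Delta f_t^2 .
\end{equation*}
An equivalent route is to write
\begin{equation*}
|G_t|^2=\frac{1}{M_t^2}\sum_{m,m'}\cos\!\big((\alpha_{t,m}-\alpha_{t,m'})\Delta f_t\big)
\end{equation*}
and expand the cosine, which leads to the same sample variance $\sigma_t^2$. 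This also shows that the result is invariant to translating the array, as it should be. The receive factor gives $|G_r|^2\approx 1-\sigma_r^2\Delta f_r^2$ in the same way.

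Next, multiply the two factors and drop the fourth-order cross term $\sigma_t^2\sigma_r^2\Delta f_t^2\Delta f_r^2$; this is the "local mainlobe approximation" in the statement. That gives
\begin{equation*}
|\chi|^2\approx 1-\sigma_t^2\Delta f_t^2-\sigma_r^2\Delta f_r^2 .
\end{equation*}
Setting this equal to the $6\,\mathrm{dB}$ level $\rho_6=10^{-0.6}$ and rearranging gives $\sigma_t^2\Delta f_t^2+\sigma_r^2\Delta f_r^2=1-\rho_6$. Dividing by $1-\rho_6$ gives exactly \eqref{eq:prop_mainlobe_ellipse}, with semi-axes $a_t$ and $a_r$.

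The mainlobe width is then defined as the minor-axis length of this ellipse, which is $2\min\{a_t,a_r\}$, i.e., \eqref{eq:prop_mainlobe_width}. This reflects the narrowest resolvable extent along the two coordinate axes. The ellipse is axis-aligned because the transmit and receive factors are decoupled, so there is no $\Delta f_t\Delta f_r$ term.

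The main obstacle is justifying the quadratic truncation at a level as deep as $-6\,\mathrm{dB}$, where the discarded higher-order terms are not negligible. I would not aim for a rigorous error bound. Instead, I would state explicitly that the approximation is understood as replacing $|\chi|^2$ by its second-order Taylor polynomial. I would also note the consistency checks: $\sigma_t^2,\sigma_r^2>0$ whenever the antennas are not all co-located, so the ellipse is non-degenerate, and the neglected cross term is of fourth order, so it is dominated by the retained quadratic terms near the peak.
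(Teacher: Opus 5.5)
Your proposal is correct and follows the paper's proof in Appendix~A essentially step for step. Both arguments expand each normalized exponential sum to second order, take the squared modulus to get $1-\sigma^2\Delta f^2$, drop the fourth-order cross term $\sigma_t^2\sigma_r^2\Delta f_t^2\Delta f_r^2$, and set the result equal to $\rho_6$ to obtain the ellipse and its minor axis. Your side remarks are sound additions that the paper leaves implicit: the evenness of $|\chi|$ reconciles the sign convention for $\Delta f_t$, the cosine-sum form gives a cross-check, and you flag that truncating at $-6\,\mathrm{dB}$ is only heuristic.
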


\begin{proof}
	The proof is provided in Appendix~A.
\end{proof}
 It is worth noting that the proposed 2D posterior-probability-based ambiguity function only introduces posterior-probability weighting to the sidelobe terms, while leaving the local mainlobe structure unchanged. Therefore, its mainlobe width is exactly the same as that of the conventional 2D ambiguity function, and the result in Proposition~\ref{prop:mainlobe_ellipse} still applies. For ambiguity function optimization, the mainlobe width $\mathrm{W}_{\mathrm{main}}$ should satisfy
$
	\mathrm{W}_{\mathrm{main}} \ge \frac{2\sqrt{2}}{Q}$.
This constraint prevents the mainlobe from becoming narrower than the cell diagonal, which would otherwise induce significant intra-cell sidelobe responses. Such responses may generate spurious peaks within a grid cell and thereby introduce estimation bias by shifting the estimated coordinates away from the true target locations.


\subsection{2D Posterior-Probability-Based Ambiguity Function}
In multi-target sensing, the sidelobes of the 2D ambiguity function characterize the interference that one target may impose on the estimation of another target, and are therefore closely related to sensing performance.  The relevant interference is determined by the angular  spatial-frequency offset pair $\{\Delta f_t,\Delta f_r\}$ between two targets. Although the 2D ambiguity function characterizes the array response over the entire 2D angular domain, it does not explicitly indicate which sidelobe or overlap regions are actually associated with the target separations in a given multi-target scene. To illustrate this point, Fig.~\ref{fig0331} shows a 1D slice of the 2D ambiguity function, where $\delta_f$ denotes the target  offset along the plotted dimension. The shaded region represents the interference from Target~1 to Target~2, showing that the relevant interference is determined by the actual target offset rather than the entire ambiguity surface.

\begin{figure}[t]
	\centering
	\includegraphics[width=2.3in]{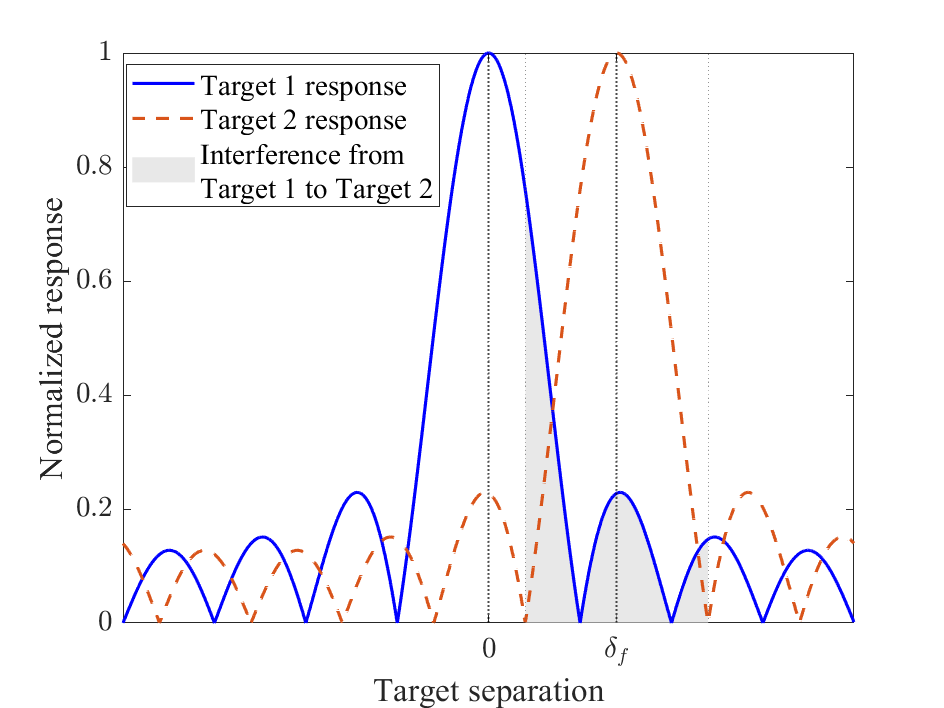}
	\caption{Illustration of inter-target interference in the ambiguity function.}
	\vspace{-3mm}
	\label{fig0331}
\end{figure}
Since only a finite number of targets are present, the estimation performance is mainly affected by the sidelobe values at the spatial-frequency offsets corresponding to the actual pairwise separations among the targets. Motivated by this observation, we propose a posterior-probability-based ambiguity function as a more informative metric for evaluating the overall multi-target sensing performance of  wireless sensing systems. Different from  the conventional ambiguity function,  the proposed ambiguity function exploits the coarse location information of multiple targets to infer their relative spatial frequency offsets $\{\Delta f_t,\Delta f_r\}$, thereby emphasizing the sidelobe regions that are more relevant to sensing performance degradation, as illustrated in Fig.  \ref{fig0331}. Specifically, the coarse target locations can be obtained from the E-step of the VBI algorithm in \cite{liu2020robust} under a ULA assumption. Therein, the transmit and receive angles are each discretized into $Q$ grid points over $[-\pi/2,\pi/2)$, forming a $Q\times Q$ joint angle grid, and the resulting posterior distribution $q(\mathbf{s})$ provides the probability that a target is present at each sensing grid point. To further map the coarse target-location information characterized by $q(\mathbf{s})$ into the spatial frequency offset domain, we partition the 2D ambiguity function along the transmit and receive spatial-frequency offset dimensions into $Q$ intervals, thereby forming a $Q\times Q$ offset grid. For each offset bin, its weight is obtained by traversing all candidate target pairs and aggregating their joint posterior probabilities, where only those pairs whose relative spatial-frequency offsets fall into that bin are counted. In this way, the probability associated with each offset bin characterizes the likelihood of the corresponding target separation and thus reflects the impact of the ambiguity sidelobes in that offset bin on multi-target sensing performance.

To illustrate the probability-mapping procedure, we take the $(k,f)$-th offset bin as an example, as shown in Fig.~\ref{fig3}. Suppose that two targets are located in the $(i,j)$-th and $(i',j')$-th sensing grid bins, respectively, and  their relative spatial-frequency offset can be mapped to the $(k,f)$-th offset bin. Then, according to the positions of these two sensing grid bins, the corresponding transmit and receive offset ranges are given by $\Delta f_t\in [\Delta f_{t,\mathrm{min}},\Delta f_{t,\mathrm{max}})$ and $\Delta f_r\in[\Delta f_{r,\mathrm{min}},\Delta f_{r,\mathrm{max}})$, where $\Delta f_{t,\mathrm{min}} \triangleq (i'-i-1)\tfrac{2}{Q}$, $\Delta f_{t,\mathrm{max}}\triangleq (i'-i+1)\tfrac{2}{Q}$, $\Delta f_{r,\mathrm{min}} \triangleq (j'-j-1)\tfrac{2}{Q}$, and $\Delta f_{r,\mathrm{max}}\triangleq(j'-j+1)\tfrac{2}{Q}$. Since the target locations lie in $[-1,1)$, the resulting offsets may extend over $(-2,2)$. Owing to the period-$2$ property of the ambiguity function, these offsets are interpreted modulo $2$ on the principal interval $[-1,1)$, which can be expressed as
\begin{equation}
	\Delta f_t\in
	\Big(\big[(i'-i-1)\tfrac{2}{Q},\, (i'-i+1)\tfrac{2}{Q}\big)\bmod 2\Big)
\end{equation}
and		
\begin{equation}
	\Delta f_r\in	\Big(\big[(j'-j-1)\tfrac{2}{Q},\, (j'-j+1)\tfrac{2}{Q}\big)\bmod 2\Big).
\end{equation}
 Hence, if the relative spatial-frequency offset of the two targets falls into the $(k,f)$-th offset bin, then the corresponding grid indices $k$ and $f$ should satisfy 
$
	k \!\in\!\!\bigg[(i'-i-1)\!\!\!\!\mod Q+\frac{Q}{2},(i'-i+1)\!\!\!\!\mod Q+\frac{Q}{2}\bigg)\!, k \!\in\mathbb{Z}$,
and
$
	f \!\in\!\!\bigg[(j'-j-1)\!\!\!\!\mod Q+\frac{Q}{2},(j'-j+1)\!\!\!\!\mod Q+\frac{Q}{2}\bigg),  f \!\in\mathbb{Z}$,
respectively. 
Equivalently,
$
k=(i'-i+\delta_1)\mod Q+c
$
and
$
f=(j'-j+\delta_2)\mod Q+c,
$
where $\delta_1,\delta_2\in\{-1,0\}$ and  $c\triangleq\lceil Q/2\rceil$. Conversely, for a given reference bin $(i,j)$ and offset bin $(k,f)$, the paired bin $(i',j')$ associated with this  offset can be recovered as
\begin{equation}\label{eq:0108_3_inv}
	i'
	=
	1+\Big(i-1+(k-c)-\delta_1\Big)\mod Q,
\end{equation}
and
\begin{equation}\label{eq:0108_4_inv}
	j'
	=
	1+\Big(j-1+(f-c)-\delta_2\Big)\mod Q.
\end{equation}
%
%
Then, we use $P_{k,f}=1$ to indicate that there exists at least one possible target pair whose relative spatial-frequency offset is mapped to the $(k,f)$-th offset bin. Accordingly, the activation probability associated with the $(k,f)$-th offset bin can be approximated as
\begin{equation}
	\label{eq:Pkf_simple2}
	p(P_{k,f}=1)
	=
	1-
	\prod_{i=1}^{Q}\prod_{j=1}^{Q}\prod_{\delta_1=-1}^{0}\prod_{\delta_2=-1}^{0}
	\Bigl(1-\eta_{i,j}\,\eta_{i',\,j'}\Bigr),
\end{equation}
where $\eta_{i,j}\triangleq q(s_{(i-1)Q+j})$. Based on the activation probabilities associated with all offset bins, the posterior-probability-based ambiguity function can be expressed as
\begin{equation}\label{eq:0414_2}
	\begin{aligned}
		&\chi_{{p}}(\Delta f_t,\Delta f_r)
		\!\!\triangleq
		\!\!\sum_{k=1}^{Q}\sum_{f=1}^{Q}
		p(P_{k,f})
		\chi(\Delta f_t,\Delta f_r)
		\mathbf 1_{\mathcal C_{k,f}}\!(\Delta f_t,\Delta f_r),
	\end{aligned}
\end{equation}
where
$
\mathcal C_{k,f}
\triangleq
\{(\Delta f_t,\Delta f_r)|
\Delta f_t\in I_{t,k},\;
\Delta f_r\in I_{r,f} \}$,
$
I_{t,k} \triangleq  [-1+(k-1)\tfrac{2}{Q},\, -1+k\tfrac{2}{Q} ),
\quad k\in\mathcal{Q}$,
$
I_{r,f} \triangleq  [-1+(f-1)\tfrac{2}{Q},\, -1+f\tfrac{2}{Q} ),
\quad f\in \mathcal{Q}$,
and $\mathbf 1_{\mathcal C_{k,f}}(\Delta f_t,\Delta f_r)$ denotes the indicator function of $\mathcal C_{k,f}$, i.e.,
\begin{equation}
	\mathbf 1_{\mathcal C_{k,f}}(\Delta f_t,\Delta f_r)\triangleq
	\begin{cases}
		1, & (\Delta f_t,\Delta f_r)\in \mathcal C_{k,f}, \\[2pt]
		0, & \text{otherwise}.
	\end{cases}
\end{equation}
\vspace{-3mm}


\begin{figure}[t]
	\centering
	\includegraphics[width=3in]{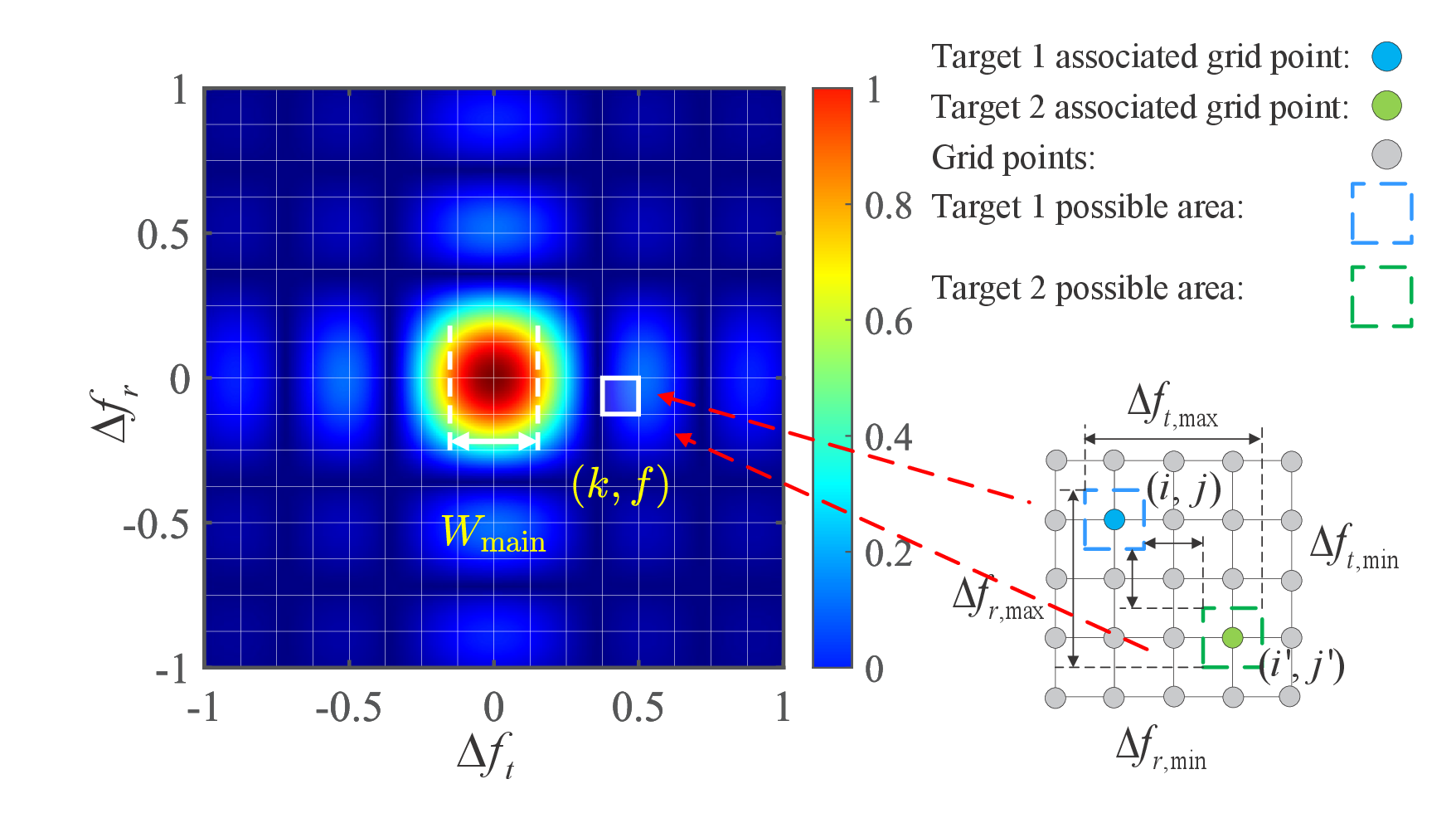}
	\vspace{-3mm}
	\caption{Illustration of  offset between  estimated targets.}
	\vspace{-3mm}
	\label{fig3}
\end{figure}

\subsection{Problem Formulation}
To improve the sensing performance in the considered multi-target setting, we propose to  optimize the MA positions based on $\chi_{{p}}(\Delta f_t,\Delta f_r)$, aiming to reduce the mainlobe width and suppress the sidelobe level while enforcing that the mainlobe width is no smaller than the grid resolution. The optimization problem is formulated as
\begin{equation}\label{eq:1217_1}
	\min_{(\mathbf{x}_t,\mathbf{x}_r)\in\mathcal{D}}
	J(\mathbf{x}_t,\mathbf{x}_r).
\end{equation}
where the objective function $J(\mathbf{x}_t,\mathbf{x}_r)$ is defined as
\begin{equation}
	\begin{aligned}
		J(\mathbf{x}_t,\mathbf{x}_r)
		\triangleq
		\int_{-1}^{1}\!\!\int_{-1}^{1}
		\sum_{k=1}^{Q}\sum_{f=1}^{Q}
		\bigl|\chi_{p}(\Delta f_t,\Delta f_r)\bigr|^2\,
		\mathrm{d}\Delta f_t\,\mathrm{d}\Delta f_r.
	\end{aligned}
\end{equation}
Here, the feasible set $\mathcal{D}$ is given by
$
\mathcal{D}= \mathcal{D}_{\mathrm{poly}}\cap \mathcal{D}_{\mathrm{quad}}$,
where $\mathcal{D}_{\mathrm{poly}}$ denotes the convex polytope induced by the linear constraints
	\begin{subequations}\label{eq:Dpoly}
		\begin{align}
			& 0 \le x_{t,1} \le \cdots \le x_{t,M_t} \le D_t, \label{eq:Dpoly_a}\\
			& x_{t,m+1}-x_{t,m} \ge x_{\min},\quad m=1,\ldots,M_t-1, \label{eq:Dpoly_b}\\
			& 0 \le x_{r,1} \le \cdots \le x_{r,M_r} \le D_r, \label{eq:Dpoly_c}\\
			& x_{r,n+1}-x_{r,n} \ge x_{\min},\quad n=1,\ldots,M_r-1, \label{eq:Dpoly_d}
		\end{align}
	\end{subequations}
where  $x_{\mathrm{min}}$ denotes the minimum allowable inter-element spacing,  $D_t$ and $D_r$ represent the maximum deployment spans of the transmit and receive antenna arrays, respectively, and  $\mathcal{D}_{\mathrm{quad}}$ denotes the feasible set associated with the nonlinear constraint $W_{\mathrm{main}}\ge \frac{2}{Q}\sqrt{2}$.
\section{ Dykstra-based Projected Gradient Descent  Algorithm}
The objective function of   problem  \eqref{eq:1217_1}  is highly nonconvex, which makes classical convex-optimization methods inapplicable. In addition, it contains multiple integrals, rendering the gradient evaluation cumbersome. Moreover, since the feasible set $\mathcal{D}$ is given by the intersection of two constraint sets, i.e.,$\mathcal{D}_{\mathrm{poly}}$ and $\mathcal{D}_{\mathrm{quad}}$, each gradient-descent update must be followed by a projection step to ensure feasibility, while direct projection onto $\mathcal{D}$ is of high computational complexity. To overcome these difficulties, we propose a low-complexity DPGD algorithm. Specifically, we first approximate the multiple-integral objective by a Riemann-sum form to facilitate gradient computation. Then, given that projections onto $\mathcal{D}_{\text{poly}}$ and $\mathcal{D}_{\text{quad}}$ can both be efficiently computed, we adopt Dykstra’s projection method [31] to find the projection onto $\mathcal{D} = \mathcal{D}_{\text{poly}} \cap \mathcal{D}_{\text{quad}}$ via alternating projections. The overall procedure alternates between a gradient-based update and the aforementioned Dykstra projection. This process is executed iteratively until the objective function converges to a stable solution.
\subsection{Gradient Computation of the Objective Function}
To evaluate the gradient of the objective function, we first approximate $J(\mathbf{x}_t,\mathbf{x}_r)$  by a Riemann-sum discretization, which can be written as
\begin{equation}\label{eq:J_riemann}
	\begin{aligned}
		&\bar{J}(\mathbf x_t,\mathbf x_r)
		\\&=
		h^2
		\sum_{k=1}^{Q}\sum_{f=1}^{Q}
		p(P_{k,f})
		\sum_{u=1}^{N}\sum_{v=1}^{N}
		\left|
		\chi_{{p}}\!\left(\Delta f_t^{(k,u)},\,\Delta f_r^{(f,v)}\right)
		\right|^2\!\!,
	\end{aligned}
\end{equation}
where the normalized offset domain $\Delta f_t\in[-1,1)$ and $\Delta f_r\in[-1,1)$ is first partitioned into $Q\times Q$ offset bins, and each bin is then further uniformly divided into $N\times N$ subcells for the Riemann-sum approximation. Accordingly, the sampling interval along each dimension is $h \triangleq \frac{2}{QN}$.
For the $(k,f)$-th offset bin, the sampled values along the transmit- and receive-offset dimensions are respectively given by
\begin{equation}
	\Delta f_t^{(k,u)} \triangleq -1+(k-1)\frac{2}{Q}+\Big(u-\tfrac{1}{2}\Big)h,\quad u=1,\ldots,N,
	\vspace{-1mm}
\end{equation}
\begin{equation}
	\Delta f_r^{(f,v)} \triangleq -1+(f-1)\frac{2}{Q}+\Big(v-\tfrac{1}{2}\Big)h,\quad v=1,\ldots,N,
	\vspace{-1mm}
\end{equation}
which correspond to the midpoints of the $N$ uniform subintervals along the transmit- and receive-offset dimensions within that bin.

Then, we update $\mathbf{x}_t$ and $\mathbf{x}_r$ by taking a descent step along their respective negative gradients, i.e.,
\begin{equation}\label{eq:0120_1}
	\mathbf{x}_t^{k+1}=\mathbf{x}_t^{k}-\omega_t^{k}\nabla_{\mathbf{x}_t}\bar{J}\!\left(\mathbf{x}_t^{k},\mathbf{x}_r^{k}\right),
\end{equation}
and
\begin{equation}\label{eq:0120_2}
	\mathbf{x}_r^{k+1}=\mathbf{x}_r^{k}-\omega_r^{k}\nabla_{\mathbf{x}_r}\bar{J}\!\left(\mathbf{x}_t^{k},\mathbf{x}_r^{k}\right),
\end{equation}
where $k$ denotes the iteration index, and $\omega_t^{k}$ and $\omega_r^{k}$ are the step sizes determined by the Armijo backtracking line search. $\nabla_{\mathbf{x}_t}\bar{J}(\mathbf{x}_t,\mathbf{x}_r)$ and $\nabla_{\mathbf{x}_r}\bar{J}(\mathbf{x}_t,\mathbf{x}_r)$ denote the gradients of $\bar{J}(\mathbf{x}_t,\mathbf{x}_r)$ with respect to $\mathbf{x}_t$ and $\mathbf{x}_r$, respectively, given by
\begin{equation}
	\nabla_{\mathbf{x}_t}\bar{J}(\mathbf{x}_t,\mathbf{x}_r)
	=
	\Big[\tfrac{\partial \bar{J}(\mathbf{x}_t,\mathbf{x}_r)}{\partial x_{t,1}},
	\tfrac{\partial \bar{J}(\mathbf{x}_t,\mathbf{x}_r)}{\partial x_{t,2}},
	\ldots,
	\tfrac{\partial \bar{J}(\mathbf{x}_t,\mathbf{x}_r)}{\partial x_{t,M_t}}\Big]^{\mathrm T}
\end{equation}
and
\begin{equation}
	\nabla_{\mathbf{x}_r}\bar{J}(\mathbf{x}_t,\mathbf{x}_r)
	=
	\Big[\tfrac{\partial \bar{J}(\mathbf{x}_t,\mathbf{x}_r)}{\partial x_{r,1}},
	\tfrac{\partial \bar{J}(\mathbf{x}_t,\mathbf{x}_r)}{\partial x_{r,2}},
	\ldots,
	\tfrac{\partial \bar{J}(\mathbf{x}_t,\mathbf{x}_r)}{\partial x_{r,M_r}}\Big]^{\mathrm T},
\end{equation}
where
\begin{equation}\label{eq:grad_xt_final}
	\begin{aligned}
		&		\frac{\partial 	\bar{J}(\mathbf x_t,\mathbf x_r)}{\partial x_{t,m}}
		=
		\frac{2h^2}{M_t^2M_r^2}
		\sum_{k=1}^{Q}\sum_{f=1}^{Q} p(P_{k,f})\!
		\sum_{u=1}^{N}\sum_{v=1}^{N}
		\bigl|S_r(\Delta f_r^{(f,v)})\bigr|^2\,
		\\&\Re\!\left\{
		S_t^*(\Delta f_t^{(k,u)})\,(j\frac{2\pi}{\lambda}\,\Delta f_t^{(k,u)})\,
		e^{j\frac{2\pi}{\lambda} x_{t,m}\Delta f_t^{(k,u)}}
		\right\},
	\end{aligned}
\end{equation}
and
\begin{equation}\label{eq:grad_xr_final}
	\begin{aligned}
		&		\frac{\partial 	\bar{J}(\mathbf x_t,\mathbf x_r)}{\partial x_{r,n}}
		=
		\frac{2h^2}{M_t^2M_r^2}
		\sum_{k=1}^{Q}\sum_{f=1}^{Q} p(P_{k,f})\!\!
		\sum_{u=1}^{N}\sum_{v=1}^{N}
		\bigl|S_t(\Delta f_t^{(k,u)})\bigr|^2\,
		\\&\Re\!\left\{
		S_r^*(\Delta f_r^{(f,v)})\,(j\frac{2\pi}{\lambda}\,\Delta f_r^{(f,v)})\,
		e^{j\frac{2\pi}{\lambda} x_{r,n}\Delta f_r^{(f,v)}}
		\right\}.
	\end{aligned}
\end{equation}
Here,
$
S_t(\Delta f_t)\triangleq \sum_{m=1}^{M_t} e^{j\frac{2\pi}{\lambda} x_{t,m}\Delta f_t}
$
and
$
S_r(\Delta f_r)\triangleq \sum_{n=1}^{M_r} e^{j\frac{2\pi}{\lambda} x_{r,n}\Delta f_r}
$
are adopted for notational simplicity.

\subsection{Dykstra Projection Method}

\begin{figure}[t]
	\centering
	\includegraphics[width=3in]{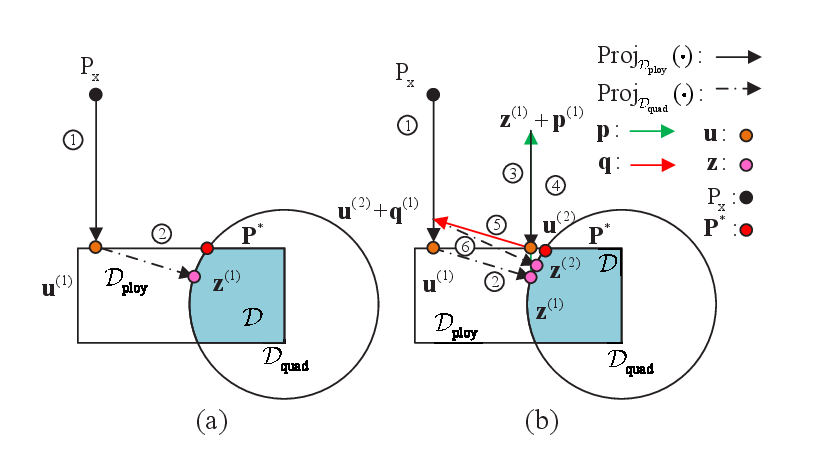}
	\vspace{-3mm}
	\caption{Illustration of (a) POCS and (b) Dykstra projection method.}
	\vspace{-3mm}
	\label{fig:0126_1}
\end{figure}

To guarantee that the iterate after the gradient-descent update remains feasible, we formulate the following projection problem:
\begin{equation}\label{eq:0107_1}
	\operatorname{Proj}_{\mathcal D}({\mathbf x})
	=
	\arg\min_{\tilde{\mathbf x}\in\mathcal D}\ \frac{1}{2}\|\tilde{\mathbf x}-{\mathbf x}\|_2^2,
\end{equation}
where $\mathbf{x}\triangleq\big[\mathbf{x}_t^{ T},\,\mathbf{x}_r^{ T}\big]^{ T}$. Since $\mathcal{D}=\mathcal{D}_{\mathrm{poly}}\cap\mathcal{D}_{\mathrm{quad}}$, and the projections onto $\mathcal{D}_{\mathrm{poly}}$ and $\mathcal{D}_{\mathrm{quad}}$ can both be computed with low complexity, it is natural to construct the projection onto $\mathcal{D}$ based on alternating projections onto $\mathcal{D}_{\mathrm{poly}}$ and $\mathcal{D}_{\mathrm{quad}}$. We first consider the projection onto convex sets (POCS) method \cite{650118}, which can find a feasible point in $\mathcal{D}$ via successive projections onto the individual sets, as illustrated in Fig.~\ref{fig:0126_1} (a), where $\mathbf{P}_{\mathbf{x}}$ denotes a point outside the feasible set $\mathcal{D}$, $\mathbf{P}^{*}$ denotes its exact Euclidean projection onto $\mathcal{D}$, and $\mathbf{u}$ and $\mathbf{z}$ denote the projection points onto $\mathcal{D}_{\mathrm{poly}}$ and $\mathcal{D}_{\mathrm{quad}}$, respectively. However, as can be seen, POCS does not generally converge to $\mathbf{P}^{*}$. To overcome this issue, we adopt Dykstra projection method \cite{boyle1986method}, which incorporates correction terms into alternating projections and guarantees convergence to $\mathbf{P}^{*}$, as illustrated in Fig.~\ref{fig:0126_1} (b). Here, $\mathbf p$ and $\mathbf q$ denote the correction variables added before the projections onto $\mathcal D_{\mathrm{poly}}$ and $\mathcal D_{\mathrm{quad}}$, respectively. Accordingly, for problem \eqref{eq:0107_1}, the correction variables are updated at iteration $l{+}1$ as
\begin{equation}\label{eq:0302_3}
	\mathbf p^{(l+1)}=\mathbf z^{(l)}+\mathbf p^{(l)}-\mathbf u^{(l+1)},
\end{equation}
\begin{equation}\label{eq:0302_4}
	\mathbf q^{(l+1)}=\mathbf u^{(l+1)}+\mathbf q^{(l)}-\mathbf z^{(l+1)},
\end{equation}
respectively. The algorithm is initialized with $\mathbf z^{(1)}=\mathbf x$, $\mathbf p^{(1)}=\mathbf 0$, and $\mathbf q^{(1)}=\mathbf 0$. Then, the iterative update rules for the alternating-projection procedure can be expressed as
\begin{equation}\label{eq:0302_1}
	\mathbf u^{(l+1)}=\operatorname{Proj}_{\mathcal D_\mathrm{ploy}}\!\big(\mathbf z^{(l)}+\mathbf p^{(l)}\big),
\end{equation}
\begin{equation}\label{eq:0302_2}
	\mathbf z^{(l+1)}=\operatorname{Proj}_{\mathcal D_\mathrm{quad}}\!\big(\mathbf u^{(l+1)}+\mathbf q^{(l)}\big).
\end{equation}
Since the transmit and receive antennas are subject to identical constraints, we take the transmit antenna as an example to illustrate the projections  onto $\mathcal D_{\mathrm{poly}}$ and $\mathcal D_{\mathrm{quad}}$. 

\textit{1) $\operatorname{Proj}_{\mathcal D_\mathrm{ploy}}(\cdot)$}: 
Since directly projecting onto $\mathcal{D}_{\mathrm{poly}}$ is complicated by the additional minimum-spacing constraint \eqref{eq:Dpoly_a}, we transform it into an equivalent standard monotonicity constraint to enable low-complexity projection. Specifically, let $\mathbf{k}_t \triangleq [k_{t,1},k_{t,2},\cdots,k_{t,M_t}]^T$, where 
$
k_{t,m}\triangleq x_{t,m}-(m-1)x_{\mathrm{min}},m=1,\cdots,M_t$,
and $\mathbf{j}_t\triangleq[0,x_{\mathrm{min}},2x_{\mathrm{min}},\cdots,(M_t-1)x_{\mathrm{min}}]^T$. Then, we can obtain  $\mathbf{x}_t=\mathbf{k}_t+\mathbf{j}_t$, and the  constraints \eqref{eq:Dpoly_a} and  \eqref{eq:Dpoly_b} can be  equivalently rewritten as the following monotone constraint: 
$
0\leq k_{t,1}\leq k_{t,2}\leq\cdots\leq k_{t,M_t}\leq D'_t$,
where $D'_t\triangleq D_t-(M_t-1)x_{\mathrm{min}}$. Then, the projection of $\mathbf{x}_t$ onto  $\mathcal{D}_{\mathrm{poly}}$ can be formulated as
\begin{equation}\label{eq:0127_2}
	\begin{aligned}
		&\operatorname{Proj}_{\mathcal D_\mathrm{ploy}}(\mathbf{d}_t)
		\\&= \mathbf{j}_t + \arg\min_{0 \le k_{t,1} \le \cdots \le k_{t,M_t} \le D'_t}
		\frac{1}{2}\left\lVert \mathbf{k}_t - (\mathbf{x}_t - \mathbf{j}_t)\right\rVert_2^2,
	\end{aligned}
\end{equation}
which is a box-constrained isotonic regression problem, that can be efficiently solved by the pool-adjacent-violators (PAV) algorithm  with $\mathcal{O}(M_t)$ complexity \cite{chen2025covariancesurrogateframeworkmovableantennaenabled}. The details are omitted for brevity.

\textit{2) $\operatorname{Proj}_{\mathcal D_\mathrm{quad}}(\cdot)$}:
 As indicated by \eqref{eq:0409_1}, the quadratic constraint is governed by the spatial variance of the antenna positions, characterizing how these elements are distributed relative to their centroid $\hat{x}_t$. Thus, this constraint controls the spread of the array relative to its center, instead of its absolute position. Accordingly, the Euclidean projection onto $\mathcal D_{\mathrm{quad}}$ can be obtained by fixing $\hat{x}_t$ and scaling all antenna positions toward this center by a common factor until the constraint is met. In particular, if $\sigma_t^2 \le {Q^2(1-\rho_6)}/{2}$, then $\mathbf{x}_t$ is already feasible; otherwise, when $\sigma_t^2 > {Q^2(1-\rho_6)}/{2}$, the projection is given by
\begin{equation}
	\operatorname{Proj}_{\mathcal D_{\mathrm{quad}}}(\mathbf{x}_t)
	= \hat x_t \mathbf{1} + \nu\left(\mathbf{x}_t - \hat x_t \mathbf{1}\right)
	= \hat x_t \mathbf{1} + \nu\,{\mathbf{\bar{x}}}_t,
\end{equation}
where 
$
	\nu
	= 
	\frac{\lambda Q}{2\pi}\sqrt{\frac{(1-\rho_6)M_t}{2}}\,
	\frac{1}{\left\lVert \mathbf{\bar{x}}_t\right\rVert_2}$.

Algorithm 1 summarizes the proposed DPGD algorithm. The proposed algorithm is essentially a standard PGD method over the feasible set $\mathcal{D}$, i.e.,
$
\mathbf{x}^{k+1}
=
\operatorname{Proj}_{\mathcal D}\!\left(\mathbf{x}^{k}-[\omega^k_t \nabla_{\mathbf{x}_t} \bar{J}\!\left(\mathbf{x}_t^{k},\mathbf{x}_r^{k}\right);\omega^k_r \nabla_{\mathbf{x}_r} \bar{J}\!\left(\mathbf{x}_t^{k},\mathbf{x}_r^{k}\right)]\right)$. Its convergence follows from standard results on PGD under the adopted assumptions on the objective function and the stepsize rule \cite{1101194}. In our implementation, the projection onto $\mathcal{D}$ is computed by the Dykstra projection method, since $\mathcal{D}$ is the intersection of closed convex sets. As an inner routine for evaluating $\operatorname{Proj}_{\mathcal D}(\cdot)$, the Dykstra projection method does not alter the outer PGD iteration. Moreover, the Dykstra projection method converges to the exact Euclidean projection onto the intersection of finitely many closed convex sets \cite{boyle1986method}. Therefore, when the inner Dykstra iteration is solved to sufficient accuracy at each step, the convergence conclusion of the proposed algorithm remains the same as that of standard PGD.

 As for its computational complexity,  the main burden arises from calculating the gradients $	\nabla_{\mathbf{x}_t}\bar{J}(\mathbf{x}_t,\mathbf{x}_r)$ and  $\nabla_{\mathbf{x}_r}\bar{J}(\mathbf{x}_t,\mathbf{x}_r)$, whose complexity are on the orders of $\mathcal{O}(M_t Q^2N^2)$ and $\mathcal{O}(M_r Q^2N^2)$, respectively. Moreover,  the projection steps onto $\mathcal{D}_{\mathrm{poly}}$ and $\mathcal{D}_{\mathrm{quad}}$ both have computational complexity  of $\mathcal{O}(M_t+M_r)$. Therefore, the overall computational complexity of Algorithm~1 can be expressed as
$\mathcal{O}\big(K_1(M_t+M_r)Q^2N^2 + K_1K_2(M_t+M_r)\big)$,
where $K_1$ denotes the number of gradient-descent iterations and $K_2$ denotes the number of alternating projection steps performed per gradient-descent iteration. In contrast, the complexity of the conventional PGD algorithm (e.g., an interior point method (IPM)-based PGD algorithm \cite{8268092}) for problem \eqref{eq:1217_1} is $\mathcal{O}(K_1(M_t+M_r)Q^2N^2 + K_1(M_t+M_r)^3)$. Thus, the proposed DPGD algorithm exhibits  a significantly lower computational complexity than the conventional approach.

\begin{algorithm}[t]
	\caption{Proposed Dykstra-based Projected Gradient Descent  Algorithm}
	\label{alg:1}
	\renewcommand{\algorithmicrequire}{\textbf{Input:}}
	\renewcommand{\algorithmicensure}{\textbf{Output:}}
	\begin{algorithmic}[1]
		\REQUIRE   Outer maximum iterations $K$, inner maximum iterations $L$, threshold $T$, step sizes $\omega_{t}$ and $\omega_{r}$, and initial points $\mathbf{x}_{t}^{0}$ and $\mathbf{x}_{r}^{0}$.
		\ENSURE Optimized antenna distribution  $\mathbf x^\star_t$ and $\mathbf x^\star_r$.
		\STATE Set $k\!\leftarrow\!0$.
		\WHILE{$k< K$}
		\STATE $k\leftarrow k+1$.
		\STATE Update $\mathbf{x}^{(k)}_t$ and $\mathbf{x}^{(k)}_r$, using \eqref{eq:0120_1} and \eqref{eq:0120_2}.
		\STATE $l\!\leftarrow\!0$.
		\STATE \textbf{Initialize:} $\mathbf{z}^{(0)} = [\mathbf{x}^{(k)T}_t,\mathbf{x}^{(k)T}_r]^T$, $\mathbf{u}^{(0)} = \mathbf{0}$, $\mathbf{p}^{(0)} = \mathbf{0}$ and $\mathbf{q}^{(0)} = \mathbf{0}$.
		\WHILE{$l<L$}
		\STATE $l\leftarrow l+1$.
		\STATE Update $\mathbf{u}^{(l)}$ and  $\mathbf{z}^{(l)}$, using \eqref{eq:0302_1} and \eqref{eq:0302_2}.
		\STATE Update $\mathbf{p}^{(l)}$ and $\mathbf{q}^{(l)}$, using \eqref{eq:0302_3} and \eqref{eq:0302_4}.
		\ENDWHILE
		\STATE  Extract $\mathbf{x}^{(k)}_t$ and $\mathbf{x}^{(k)}_r$ from $\mathbf{z}^{(l)}$.
		\IF{$\|\mathbf{x}^{(k)}_t-\mathbf{x}^{(k-1)}_t\|+\|\mathbf{x}^{(k)}_r-\mathbf{x}^{(k-1)}_r\|<T$}
		\STATE \textbf{break;}
		\ENDIF
		\ENDWHILE
		\RETURN $\mathbf x^\star_t=\mathbf{x}^{(k)}_t$ and $\mathbf x^\star_r=\mathbf{x}^{(k)}_r$.
	\end{algorithmic}
\end{algorithm}

\section{Numerical Results}
In this section, we present numerical results to validate the effectiveness of the proposed cross sparsity Markov mixture prior model, the effectiveness of the proposed ambiguity function for MA optimization, the estimation-accuracy gains enabled by MAs, and the efficacy of the proposed DPGD algorithm.
We consider a colocated MIMO {wireless sensing system} with $M_t=8$ transmit and $M_r=8$ receive MAs operating at a carrier frequency of $4\,\mathrm{GHz}$. The initial inter-element spacing of both arrays is set to $\lambda/2$, and the transmit- and receive-aperture sizes are set to $D_t=8\lambda$ and $D_r=8\lambda$, respectively. The angular domain is discretized into $Q=16$ grids.
Based on \cite{chen2024joint}, the large-scale fading models for the direct LoS paths,
$g_k$, $k \in \mathcal{K}$, and for the first-order NLoS paths,
$h_{i,j}$, $i,j \in \mathcal{K},\, i \neq j$, are expressed as
$
	g_k=\sqrt{\frac{\lambda^2 \kappa_k}{64\pi^3 d_k^4}}$
and
$
	h_{i,j}=\sqrt{\frac{\lambda^2\,\Psi_{i,j}\,\Psi_{j,i}}{(4\pi)^4\, d_i^2\, l_{i,j}^2\, d_j^2}}$,
respectively.
Here, $\kappa_k$ denotes the radar cross section (RCS) of the $k$-th target along the direct LoS path and is set to $-10\,\text{dBsm}$, while $\Psi_{i,j}$ denotes the bistatic RCS of the $i$-th target along the first-order NLoS path in the look  of the $j$-th target (i.e., scattering from the $i$-th target toward the $j$-th target) and is set to $0\,\text{dBsm}$ \cite{6374215}.  $d_k$ and $l_{i,j}$ denote the distance between the $k$-th target and the radar, and the distance between the $i$-th and $j$-th targets, respectively.
 The transmit power of the base station is set as $P_T=30\,\text{dBm}$.  We assume  that
the direct angles of the targets are randomly distributed within the interval $[-\frac{\pi}{2}, +\frac{\pi}{2}]$, with the constraint that they lie on distinct angular grid points.
$J=500$ independent Monte Carlo trails  are performed, and the estimation accuracy is evaluated in terms of the RMSE, defined as
$
	\mathrm{RMSE} = \sqrt{\frac {1}{K}\frac{1}{J} \sum_{k=1}^{K} 
		 \sum_{j=1}^{J} 
		\left( \theta_k^{(j)} - \hat{\theta}_k^{(j)} \right)^2 }$,
where $\theta_k^{(j)}$ and $\hat{\theta}_k^{(j)}$ denote the true and estimated direct LoS path angles of the $k$-th target in the $j$-th trial, respectively.

\begin{figure}[t]
	\centering
	\includegraphics[width=2in]{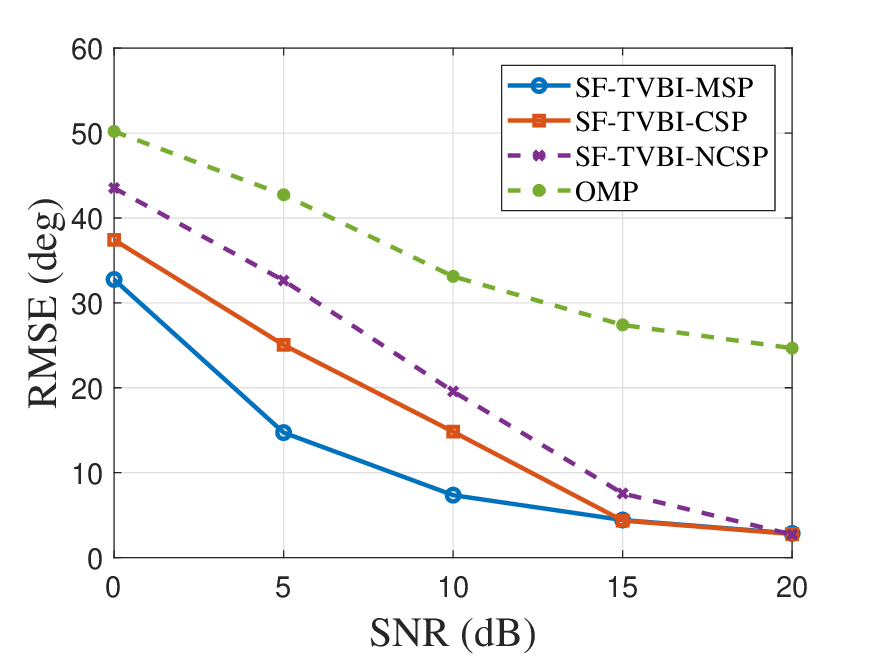}
	\vspace{-3mm}
	\caption{RMSE versus SNR for different prior models.}
	\vspace{-4mm}
	\label{fig0122_2}
\end{figure}
\begin{figure}[t]
	\centering
	\includegraphics[width=2in]{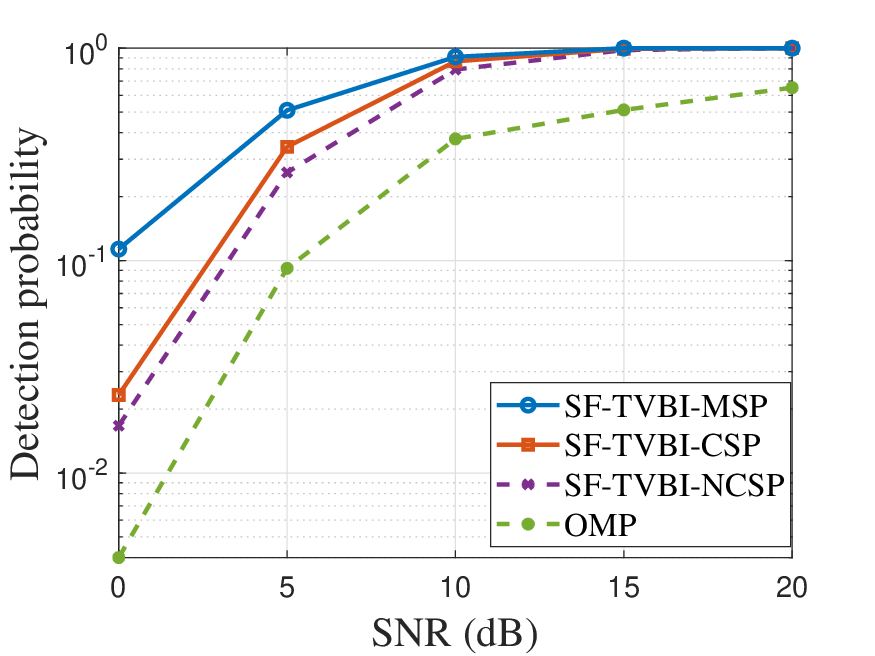}
		\vspace{-3mm}
	\caption{Detection probability versus SNR for different prior models.}
		\vspace{-5mm}
	\label{fig0122_3}
\end{figure}
\begin{figure}[t]
	\centering
	\includegraphics[width=2.3in]{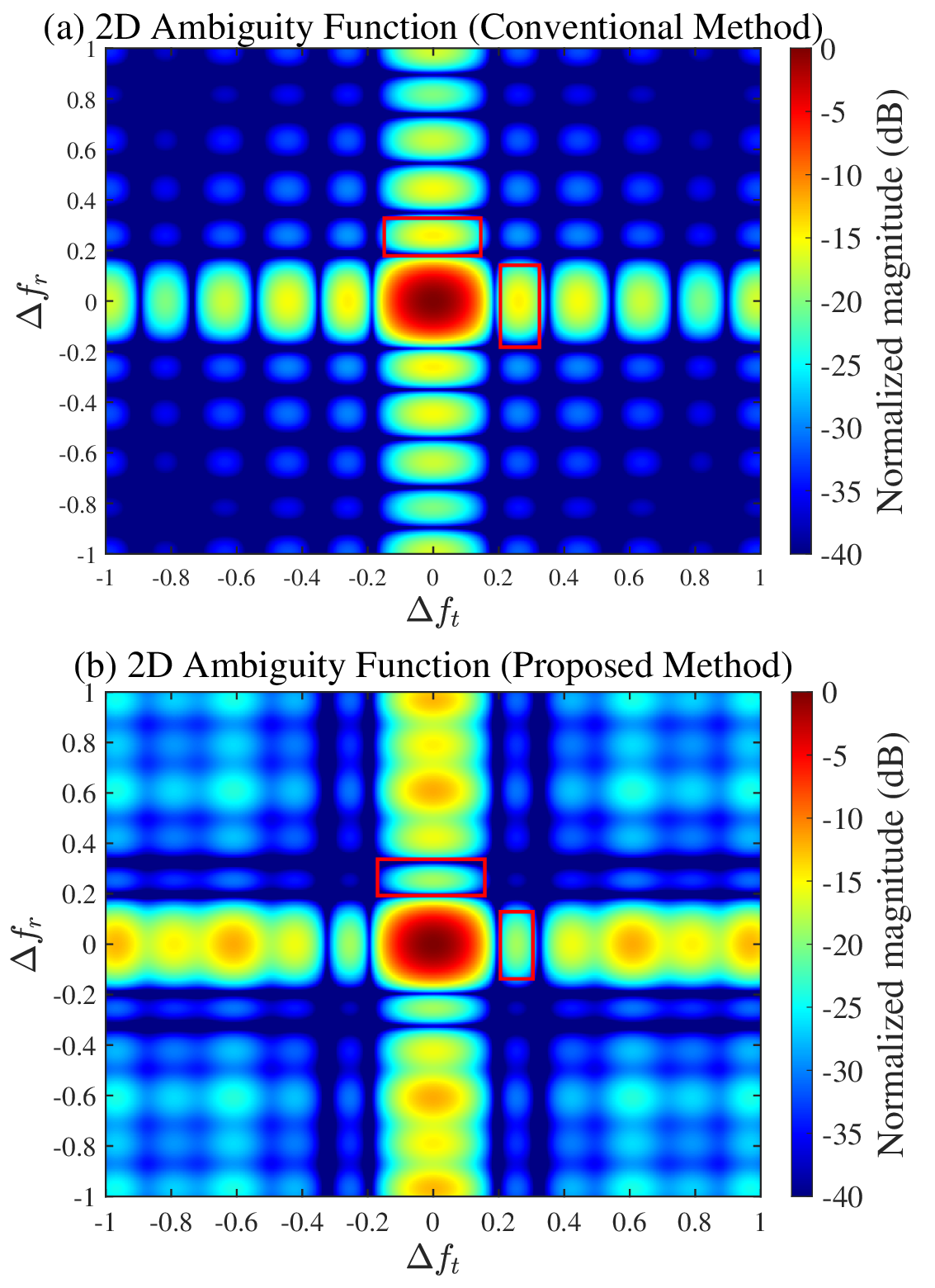}
		\vspace{-3mm}
	\caption{Comparison of 2D ambiguity functions for the traditional and proposed optimization methods.}
		\vspace{-4mm}
	\label{fig0122_8}
\end{figure}
\begin{figure}[t]
	\centering
	\includegraphics[width=2in]{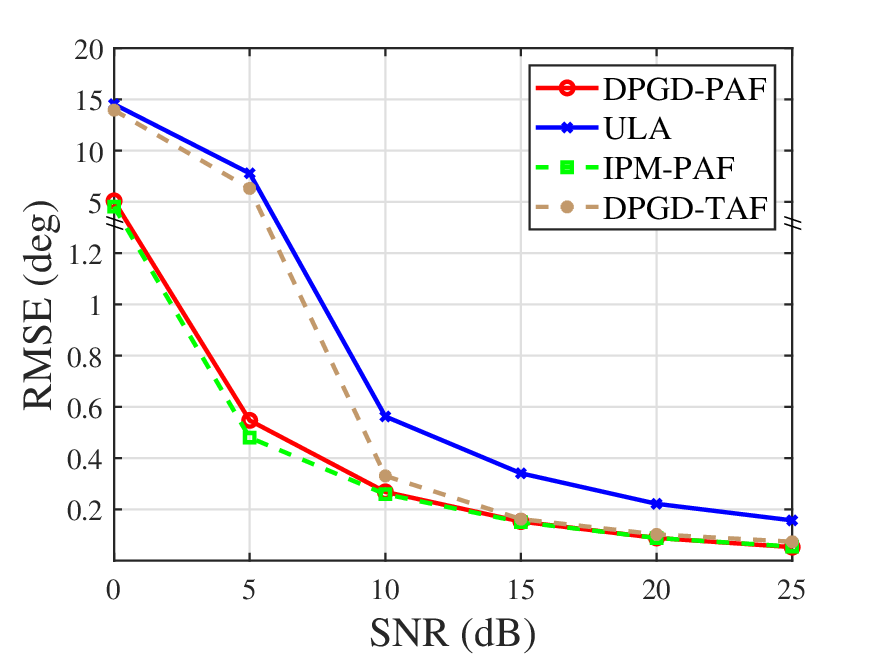}
	\vspace{-3mm}
	\caption{RMSE versus SNR for the MA-optimized antenna deployment.}
			\vspace{-5mm}
	\label{fig0122_5}
\end{figure}
\begin{figure}[t]
	\centering
	\includegraphics[width=2in]{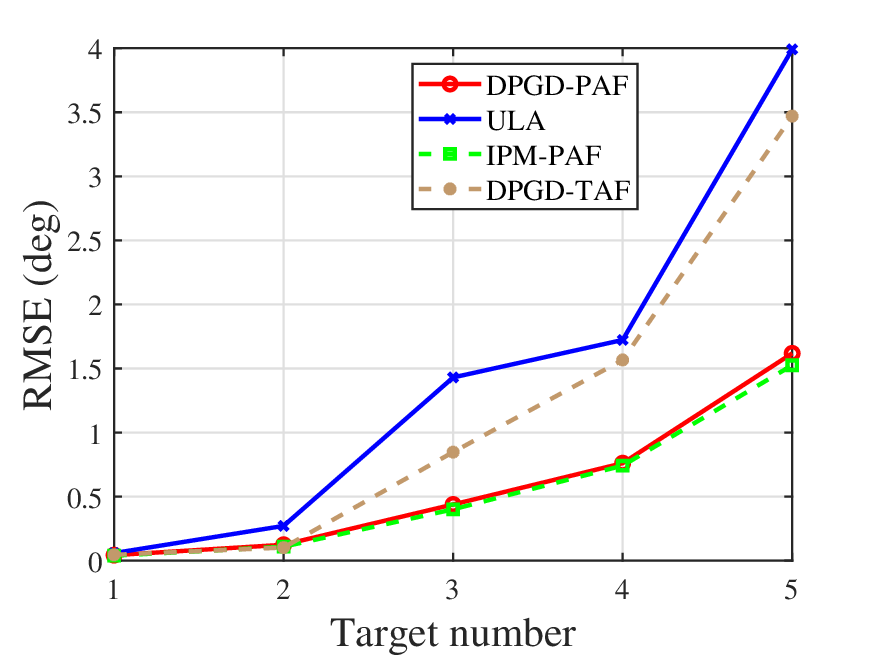}
		\vspace{-3mm}
	\caption{RMSE versus target number for the MA-optimized antenna deployment.}
		\vspace{-5mm}
	\label{fig0122_6}
\end{figure}


First, in Fig.~\ref{fig0122_2}, we evaluate the sensing performance gain of SF-TVBI with the proposed Markov mixture prior (SF-TVBI-MSP), where four targets are considered and their angles are randomly drawn from the interval $[-\pi/2,\pi/2]$. For comparison, we consider SF-TVBI without the cross sparsity prior model (SF-TVBI-NCSP), SF-TVBI with the cross sparsity prior model (SF-TVBI-CSP), and orthogonal matching pursuit (OMP) as benchmark methods. The results show that the SF-TVBI-based methods outperform OMP by at least $5\,\text{dB}$, mainly due to the use of probabilistic prior models to improve estimation accuracy. Moreover, SF-TVBI-CSP achieves about $3\,\text{dB}$ gain over SF-TVBI-NCSP, which demonstrates the benefit of exploiting the cross sparsity structure. Compared with SF-TVBI-CSP, the proposed SF-TVBI-MSP  provides around $5\,\text{dB}$ gain in the low-to-moderate SNR regime, since the proposed mixture prior can effectively alleviate the ambiguity caused by the cross sparsity structure. At very low SNRs, its performance becomes close to that of SF-TVBI-CSP, because the sparse structure is no longer sufficient to suppress noise-induced estimation errors. At high SNRs, the performance gap between the two methods becomes negligible, as the ambiguity caused by cross sparsity is largely removed and the advantage of the mixture prior correspondingly diminishes.

Next, in Fig. \ref{fig0122_3}, we evaluate the target detection probability of  SF-TVBI-MSP across a range of SNR levels. This metric is defined as the probability of targets being identified in the correct grid bin. As can be seen, the VBI-based algorithm can achieve better detection probability than the OMP algorithm. Moreover, SF-TVBI-CSP achieves better detection performance than SF-TVBI-NCSP, whereas the improvement is modest. This is mainly because, as the SNR decreases, the inter-grid ambiguity induced by cross sparsity reduces the probability of correct detection. In contrast, SF-TVBI-MSP offers an additional gain of about $3\,\text{dB}$ in terms of detection probability over SF-TVBI-CSP in the low-SNR regime. This gain arises because the proposed prior preserves the benefit of cross sparsity in enhancing detection via first-order NLoS paths, while suppressing the ambiguity introduced by cross sparsity through the Markov dependence.

In Fig.~\ref{fig0122_8}, we compare the 2D ambiguity functions obtained by the conventional ambiguity function optimization method \cite{san2007mimo} and the proposed method, where the two target locations are randomly generated, and the estimated target offsets are $\Delta f_t=0.28$ and $\Delta f_r=0.28$. It can be observed that the proposed method achieves a mainlobe width comparable to that of the traditional  method. More importantly, around the target offset location, i.e., near $\Delta f_t=0.28$ and $\Delta f_r=0.28$ as highlighted by the red boxes, the sidelobe level of the proposed method is significantly lower than that of the traditional  method. This demonstrates that the proposed method can effectively suppress the sidelobes at desired angular offsets while maintaining a narrow mainlobe width, thereby improving the sensing performance.

In Fig.~\ref{fig0122_5}, we evaluate the sensing performance under different SNRs to quantify the gain achieved by the proposed MA design, where the angles of the two targets are randomly drawn from the interval $[-\pi/2,\pi/2]$ are considered. For comparison, we consider DPGD with the proposed ambiguity function \eqref{eq:0414_2} (DPGD-PAF), DPGD with the traditional ambiguity function \eqref{eq:1130} (DPGD-TAF), IPM with the proposed ambiguity function (IPM-PAF), and a ULA-based antenna deployment. The results show that DPGD-PAF outperforms DPGD-TAF over the whole SNR range, which confirms that the proposed ambiguity function is more suitable  for antenna position optimization than the traditional one. This is because the proposed ambiguity function incorporates the coarse location information of multiple targets and places greater emphasis on the sidelobe regions associated with their actual relative separations, thereby producing an antenna deployment that is specifically optimized to mitigate inter-target interference and thus bolstering the overall efficacy of multi-target sensing. Furthermore, DPGD-PAF achieves nearly the same performance as IPM-PAF, which demonstrates that the proposed DPGD algorithm can provide competitive sensing performance with substantially lower computational complexity. Finally, both MA-optimized deployments achieve clear gains over the ULA benchmark, showing the advantage of adaptive antenna deployment in reducing estimation error.

Finally, in  Fig.~\ref{fig0122_6}, we investigate the RMSE performance of the considered methods under different number of targets, where the receive SNR is fixed at $\mathrm{SNR}=20\,\text{dB}$. It can be observed that, compared with DPGD-TAF, DPGD-PAF yields improved RMSE performance as the target number increases. This is because, when more targets are present, the sidelobe-induced inter-target interference becomes more severe, and the proposed ambiguity function can then better exploit the coarse location information of multiple targets to suppress the sidelobe interference associated with their separations, thereby leading to a more suitable antenna deployment for dense multi-target sensing. Moreover, it can be observed that, compared with the ULA benchmark, the performance gain provided by MAs becomes more pronounced as the target number increases. This is because MAs can flexibly adapt the antenna positions, and are therefore more capable of suppressing the increasingly severe sidelobe-induced inter-target interference in dense multi-target sensing scenarios.

\section{Conclusion}
In this paper, we investigated an MA-enabled  wireless sensing system for multi-target estimation under multipath propagation.  To improve sensing performance, we proposed a cross sparsity Markov mixture prior model to enhance direct LoS path estimation accuracy by exploiting the structural correlation between the LoS and NLoS paths. Furthermore, due to the difficulty of characterizing multi-target sensing performance in the presence of mutual target interference, we introduced a 2D ambiguity function and established its explicit relationship with the antenna spatial distribution. Based on this analysis, we developed a posterior-probability-based ambiguity function using the posterior probabilities of target locations obtained from coarse estimation, and formulated an MA distribution optimization problem to suppress sidelobes and narrow the mainlobe. To efficiently solve the resulting highly non-convex problem, we developed a low-complexity DGPD algorithm. Numerical results verified the sensing performance gains enabled by the proposed prior model and MAs, and demonstrated the efficiency of the proposed algorithm.
\vspace{-2mm}
{\appendix
\subsection{Proof of Proposition 1}
As can be seen from \eqref{eq:1131}, the considered 2D ambiguity function is constructed as the product of two terms, each of which is a normalized summation of complex exponential functions. Since the mainlobe region corresponds to sufficiently small frequency offsets $\{\Delta f_t,\Delta f_r\}$ around the peak, we characterize the local mainlobe behavior by applying Taylor expansion to the exponential terms in \eqref{eq:1131}. Specifically, 
$
	\frac{1}{M_t}\sum_{m=1}^{M_t} e^{j \alpha_{t,m}\Delta f_t}$ and $
	\frac{1}{M_r}\sum_{n=1}^{M_r} e^{j \alpha_{r,n}\Delta f_r}$
can be approximated up to the second order as
\begin{equation}
	\frac{1}{M_t}\sum_{m=1}^{M_t}
	e^{j \alpha_{t,m}\Delta f_t}
	\approx
	1 + j\mu_{1,t}\Delta f_t - \frac{\mu_{2,t}}{2}\Delta f_t^2
	\vspace{-2mm}
\end{equation}
and
\begin{equation}
	\frac{1}{M_r}\sum_{n=1}^{M_r}
	e^{j \alpha_{r,n}\Delta f_r}
	\approx
	1 + j\mu_{1,r}\Delta f_r - \frac{\mu_{2,r}}{2}\Delta f_r^2,
		\vspace{-2mm}
\end{equation}
respectively, where $\mu_{1,t}$, $\mu_{2,t}$, $\mu_{1,r}$, and $\mu_{2,r}$ are defined as in the main text.

By taking the squared magnitudes of the above second-order expansions and neglecting the resulting fourth-order terms $\Delta f_t^4$ and $\Delta f_r^4$, which are negligible in the local mainlobe region, we obtain
\begin{equation}\label{eq:0113_1}
	\left|\frac{1}{M_t}\sum_{m=1}^{M_t}
	e^{j \alpha_{t,m}\Delta f_t}\right|^2
	\approx 1 - \sigma_t^2\,\Delta f_t^2
	\vspace{-2mm}
\end{equation}
and
\begin{equation}\label{eq:0113_2}
	\left|\frac{1}{M_r}\sum_{n=1}^{M_r}
	e^{j \alpha_{r,n}\Delta f_r}\right|^2
	\approx 1 - \sigma_r^2\,\Delta f_r^2,
\end{equation}
where $\sigma_t^2$ and $\sigma_r^2$ are defined as in the main text.
Substituting \eqref{eq:0113_1} and \eqref{eq:0113_2} into \eqref{eq:1131} yields 
$
		|\chi(f_t,f_r,f'_t,f'_r)|^2
		\approx
		1-\sigma_t^2\Delta f_t^2-\sigma_r^2\Delta f_r^2$,
where the fourth-order coupling term $\sigma_t^2\sigma_r^2\Delta f_t^2\Delta f_r^2$ is omitted under the local approximation.

According to the adopted definition of the mainlobe region, the mainlobe boundary is defined as the set of points where $|\chi(f_t,f_r,f'_t,f'_r)|^2$ decreases by $6\,\mathrm{dB}$ from its peak value. Therefore, the boundary is characterized by
\begin{equation}\label{eq:0113_5}
	\sigma_t^2\Delta f_t^2 + \sigma_r^2\Delta f_r^2 = 1-\rho_6,
\end{equation}
where $\rho_6 \triangleq 10^{-0.6}$. Rearranging \eqref{eq:0113_5} gives 
$
	\frac{\Delta f_t^2}{a_t^2} + \frac{\Delta f_r^2}{a_r^2} = 1$,
which is an ellipse. Hence, the mainlobe width, defined as the length of the minor axis of this ellipse, is given by
\begin{equation}\label{eq:0409_1}
	\mathrm{W}_{\mathrm{main}}
	=
	\min\left\{2\sqrt{({1-\rho_6})/{\sigma_t^2}},\,2\sqrt{({1-\rho_6})/{\sigma_r^2}}\right\}.
\end{equation}
This completes the proof.
\vspace{-2mm}
\bibliographystyle{IEEEtran} 
\bibliography{ref2.bib} 
\end{document}